\definecolor{joe}{rgb}{0.5,0,1}
\newtheoremstyle{propstyle} 
    {3mm}                    
    {1mm}                    
    {\itshape}                   
    {}                           
    {\scshape}                   
    {.}                          
    {.5em}                       
    {}  
\theoremstyle{propstyle}
\newtheorem{prop}{Proposition}
\theoremstyle{propstyle}
\theoremstyle{propstyle}
\newsavebox\ideabox
\newcommand{\ba}{\mathbf{a}}
\newcommand{\bs}{\mathbf{s}}
\newcommand{\bx}{\mathbf{x}}
\newcommand{\by}{\mathbf{y}}
\newcommand{\bz}{\mathbf{z}}
\newcommand{\bP}{\mathbf{P}}
\newcommand{\bA}{\mathbf{A}}
\newcommand{\bW}{\mathbf{W}}
\newcommand{\bI}{\mathbf{I}}
\newcommand{\bD}{\mathbf{D}}
\newcommand{\bU}{\mathbf{U}}
\newcommand{\bV}{\mathbf{V}}
\newcommand{\bB}{\mathbf{B}}
\newcommand{\bC}{\mathbf{C}}
\newcommand{\bM}{\mathbf{M}}
\newcommand{\beps}{\bm{\varepsilon}}
\newcommand{\all}{\bullet}
\newcommand{\bfzero}{\mathbf{0}}
\newcommand{\bfmu}{\bm{\mu}}
\newcommand{\bftheta}{\bm{\theta}}
\DeclareMathOperator{\E}{E}
\DeclareMathOperator{\var}{var}
\DeclareMathOperator{\cov}{cov}
\DeclareMathOperator{\chol}{chol}
\DeclareMathOperator{\rchol}{rchol}
\newcommand{\GP}{GP}
\renewcommand{\path}{Q}
\newcommand{\normal}{\mathcal{N}}
\newcommand{\order}{\mathcal{O}}
\newcommand{\domain}{\mathcal{D}}
\newcommand{\dens}{f}
\newcommand{\adens}{\widehat{f}}
\newcommand{\locs}{\mathcal{S}}
\title{A general framework for Vecchia approximations of Gaussian processes}
\author{
Matthias Katzfuss\thanks{Department of Statistics, Texas A\&M University. \texttt{katzfuss@gmail.com}}
\and
Joseph Guinness\thanks{Department of Statistics and Data Science, Cornell University}
}
\date{}
\begin{document}

\maketitle

\begin{abstract}
Gaussian processes (GPs) are commonly used as models for functions, time series, and spatial fields, but they are computationally infeasible for large datasets. Focusing on the typical setting of modeling data as a GP plus an additive noise term, we propose a generalization of the Vecchia (1988) approach as a framework for GP approximations. We show that our general Vecchia approach contains many popular existing GP approximations as special cases, allowing for comparisons among the different methods within a unified framework. Representing the models by directed acyclic graphs, we determine the sparsity of the matrices necessary for inference, which leads to new insights regarding the computational properties. Based on these results, we propose a novel sparse general Vecchia approximation, which ensures computational feasibility for large spatial datasets but can lead to considerable improvements in approximation accuracy over Vecchia's original approach. We provide several theoretical results and conduct numerical comparisons. We conclude with guidelines for the use of Vecchia approximations in spatial statistics.
\end{abstract}

{\small\noindent\textbf{Keywords:}
computational complexity; covariance approximation; directed acyclic graphs; large datasets; sparsity; spatial statistics}

\section{Introduction \label{sec:intro}}

Gaussian processes (GPs) have become popular choices as models or prior distributions for functions, time series, and spatial fields \citep[e.g.,][]{Banerjee2004,Rasmussen2006,Cressie2011}. The defining feature of a GP is that the joint distribution of a finite number of observations is multivariate normal. However, since computing with multivariate normal distributions incurs quadratic memory and cubic time complexity in the number of observations, GP inference is infeasible when the data size is in the tens of thousands or higher, limiting the direct use of GPs for many large datasets available today.

To achieve computational feasibility, numerous approaches have been proposed in the statistics and machine-learning literatures.
These include approaches leading to sparse covariance matrices \citep{furrer2006covariance,kaufman2008covariance,Du2009}, sparse inverse covariance (i.e., precision) matrices \citep{rue2005gaussian,lindgren2011explicit,Nychka2012}, and low-rank matrices \citep[e.g.,][]{Higdon1998, Wikle1999, Quinonero-Candela2005, Banerjee2008, Cressie2008, Katzfuss2010}. Several other approaches are described in Section \ref{sec:existing}.
\citet{Heaton2017} review and compare many of these methods, plus several algorithmic approaches \citep{Gramacy2015,Gerber2018,Guhaniyogi2018}.

In this article, we extend and study Vecchia's approach \citep{Vecchia1988}, one of the earliest proposed GP approximations, which leads to a sparse Cholesky factor of the precision matrix. Based on some ordering of the GP observations, Vecchia's approximation replaces the high-dimensional joint distribution with a product of univariate conditional distributions, in which each conditional distribution conditions on only a small subset of previous observations in the ordering. This approximation incurs low computational and memory burden, it has been shown to be highly accurate in terms of Kullback-Leibler divergence from the true model \citep[e.g.,][]{Guinness2016a}, and it is amenable to parallel computing because each term can be computed separately.

We consider the typical setting of spatial data modeled as a GP plus an additive noise or nugget component. \citet{Datta2016} proposed to apply Vecchia's approximation to the latent GP instead of the noisy observations, but \citet{Finley2017} noted that this approach ``require[d] an excessively long run time.'' Here, we propose a generalized version of the Vecchia approximation, which allows conditioning on both latent and observed variables. We show that our general Vecchia approach contains several popular GP approximations as special cases, allowing for comparisons among the different approaches within a unified framework. We give a formula for efficient computation of the likelihood in the presence of noise. Further, we describe how approximations within the general Vecchia framework can be represented by directed acyclic graph (DAG) models, and we use the connection to DAGs to prove results about the sparsity of the matrices appearing in the inference algorithms. The results lead to new insights regarding computational properties, including shedding light on the computational challenges with latent Vecchia noted in \citet{Finley2017}. Based on these results, we propose a particular instance of the general Vecchia framework, which we call sparse general Vecchia (SGV), that provides guaranteed levels of sparsity in its matrix representation but can lead to considerable improvements in approximation accuracy over Vecchia's original approach. In addition to the theoretical results, we provide numerical studies exploring different options within the general Vecchia framework and comparing our novel SGV to existing approximations.

This article is organized as follows. In Section \ref{sec:methodology}, we review Vecchia's approximation, introduce our general Vecchia framework, and detail connections to DAGs. In Section \ref{sec:existing}, we describe several existing GP approximations as special cases of the framework. In Section \ref{sec:computation}, we consider inference within the framework, including introducing the necessary matrices and studying their sparsity, and deriving the computational complexity. In Section \ref{sec:sgv}, we describe our new SGV approximation and contrast it with two existing approaches.
Section \ref{sec:ordcond} contains additional insights on ordering and conditioning.
Numerical results and comparisons can be found in Section \ref{sec:numerical}.
In Section \ref{sec:conclusions}, we conclude and provide guidelines for the use of Vecchia approximations.
Appendices \ref{app:vectornotation}--\ref{app:proofs} contain further details and proofs. The methods and algorithms proposed here are implemented in the R package \texttt{GPvecchia} available at \url{https://github.com/katzfuss-group/GPvecchia}.


\section{A general Vecchia approach \label{sec:methodology}}

\subsection{Noisy observations of a Gaussian process \label{sec:gpintro}}

Let $\{y(\bs) \!: \bs \in \domain\}$, or $y(\cdot)$, be a process of interest on a continuous (i.e., non-gridded) domain $\domain \subset \mathbb{R}^d$, $d \in \mathbb{N}^+$. We assume that $y(\cdot) \sim \GP(0,K)$ is a zero-mean Gaussian process (GP) with covariance function $K: \domain \times \domain \to \mathbb{R}$. We place no restrictions on $K$, other than assuming that it is a positive-definite function that is known up to a vector of parameters, $\bftheta$. Usually, $K$ will be a continuous covariance function without a nugget component, which will be added in the next paragraph. In most applications, $y(\cdot)$ will not have zero mean, but estimating and subtracting the mean is typically not a computational problem, so we ignore the mean here for simplicity.
Further, let $\locs$ be a vector of vectors of locations, meaning that $\locs = (\locs_1,\ldots,\locs_\ell)$, where $\locs_i$ is a vector of $r_i$ locations in $\domain$. (Our vector and indexing notation is explained in detail in Appendix \ref{app:vectornotation}.) Then define $\by_i = y(\locs_i)$ to be the Gaussian process vector at locations $\locs_i$, and form the vector $\by\colonequals (\by_1,\ldots,\by_\ell)$.

We observe $\bz_i = \by_i + \beps_i$, where the noise or nugget terms $\beps_i$ are independent $\normal_{r_i}(\bfzero,\tau^2 \bI)$. The noisy-observation assumption is ubiquitous in spatial statistics, GP regression, and functional data, and has been proposed for the modeling of computer experiments \citep{gramacy2012cases}. In this work, we assume that we observe the subset $\bz_o$ of $\bz = (\bz_1,\ldots,\bz_\ell)$, where $o \subset (1,\ldots,\ell)$.
Parameters $\bftheta$ and $\tau^2$ are assumed to be known for now; parameter inference will be discussed in Section \ref{sec:likelihood}.

\subsection{Review of Vecchia's approximation \label{sec:vecchiareview}}

Define $h_o(i) \colonequals o \cap (1,\ldots,i-1)$ to be the observed ``history'' of $i$ with $h_o(1) = \emptyset$, allowing us to write the joint density for the observed vector $\bz_o$ as
\begin{equation}
\label{eq:exactdecomp}
\dens(\bz_o) = \prod_{i \in o} \dens(\bz_{i} | \bz_{h_o(i)} ).
\end{equation}
Working with or evaluating the density in \eqref{eq:exactdecomp} directly incurs $O(n_z^2)$ memory and $\order(n_z^3)$ computational cost, and is thus infeasible for large $n_z$, where $n_z$ is the number of individual observations in $\bz_o$.


To avoid these computational difficulties, Vecchia's approximation \citep{Vecchia1988} replaces $h_o(i)$ with a subvector $g(i)$, where $g(i)$ is often chosen to contain those indices corresponding to observations nearby in distance to the $i$th vector of observations. We refer to $g(i)$ as the $i$th conditioning index vector, and to $\bz_{g(i)}$ as the conditioning vector for $\bz_i$. This leads to the Vecchia approximation of the joint density in \eqref{eq:exactdecomp}:
\begin{equation}
\label{eq:approxdecomp}
\adens(\bz_o) = \prod_{i \in o} \dens(\bz_i | \bz_{g(i)}).
\end{equation}
\cite{Vecchia1988} considered only the case of $\bz_i$ as singletons, whereas \cite{stein2004} are credited with the generalization to vector $\bz_i$. \citet{cressie1998image} showed that \eqref{eq:approxdecomp} implies a Markov random field model with sparse precision matrix. \cite{stein2004} showed that maximizing \eqref{eq:approxdecomp} corresponds to solving a set of unbiased estimating equations, and they proposed a residual maximum likelihood (REML) method for estimating covariance parameters.

\subsection{The general Vecchia framework \label{sec:generalvecchia}}

The standard Vecchia approach in Section \ref{sec:vecchiareview} applies to the vector of observations, $\bz_o$.
We propose a general Vecchia approach, which applies Vecchia's approximation to a vector $\bx = \by \cup \bz_o$ consisting of the data $\bz_o$ and latent variables $\by$:
\begin{equation}
\label{eq:vecchia1}
\adens(\bx) = \prod_{i=1}^b f( \bx_i | \bx_{g(i)} ),
\end{equation}
where $b$ is the number of subvectors in $\bx$, and $g(i) \subset h(i) = (1,\ldots,i-1)$.
Here, the elements of $\by$ and $\bz_o$ are interweaved within $\bx$. Specifically, using the notation from Appendix \ref{app:vectornotation}, the ordering in $\bx$ is defined as $\#(\by_i,\bx) < \#(\by_j,\bx)$ when $i<j$, and $\#(\bz_i,\bx) = \#(\by_i,\bx)+1$. In words, the $\by_i$ vectors retain their relative ordering in $\bx$, and $\bz_i$ is inserted directly after $\by_i$ when $i \in o$.
Then, the general Vecchia approximation in \eqref{eq:vecchia1} can be written as
\begin{equation}
\label{eq:vecchia2}
\adens(\bx) = \bigg(\prod_{i=1}^\ell f ( \, \by_i \, | \, \by_{q_y(i)},  \bz_{q_z(i)}  \, ) \bigg) \bigg(\prod_{i \in o} f( \bz_i | \by_i ) \bigg).
\end{equation}
For the conditioning vector of $\by_i$, $j \in q_y(i)$ means that $\by_i$ conditions on $\by_j$, while $j \in q_z(i)$ means that $\by_i$ conditions on $\bz_j$. It can be more accurate but also more computationally expensive to condition on $\by_j$ rather than on $\bz_j$; we will explore this tradeoff in Section \ref{sec:sgv}. We always pick $\by_i$ as the conditioning vector for $\bz_i$, because $\bz_i$ was defined to be conditionally independent of all other vectors given $\by_i$. For the same reason, there is nothing to be gained by conditioning $\by_i$ on both $\by_j$ and $\bz_j$, and so we always take $q_y(i) \cap q_z(i) = \emptyset$. We call $q(i) = \big( q_y(i) , q_z(i) \big)$ the conditioning index vector.
Note that if $j\notin o$, assuming $j \in q_z(i)$ is equivalent to removing $j$ from $q(i)$.

Usually, it is of interest to evaluate an approximation to $f(\bz_o)$, which involves integrating the approximation for the joint density of $\bz_o$ and $\by$ in \eqref{eq:vecchia1} and \eqref{eq:vecchia2} over the latent vector $\by$:
\begin{equation}
\label{eq:induceddist}
\textstyle\adens(\bz_o) = \int \adens(\bx) d\by.
\end{equation}
If the conditioning vectors are equal to the respective history vectors (i.e., $g(i) = h(i)$ for all $i$), the exact distribution $\dens(\bz_o)$ in \eqref{eq:exactdecomp} is recovered. In this sense, the general Vecchia approximation converges to the truth as the conditioning vectors grow larger. However, large conditioning vectors negate the computational advantages, and thus the case of small conditioning vectors is of interest here.

In summary, a general Vecchia approximation $\adens(\bx)$ of $\dens(\bx)$, and the implied approximation $\adens(\bz_o)$ of $\dens(\bz_o)$, are determined by the following choices:
\begin{description}[itemsep=0.3pt,topsep=10pt,parsep=3pt]
\item[    C1:] The $n_y$ locations $\locs$, usually a superset of the observed locations.
\item[    C2:] The partitioning of $\locs$ into $\ell \leq n_y$ vectors of locations.
\item[    C3:] The ordering of the location vectors as $\locs = (\locs_1,\ldots,\locs_\ell)$.
\item[    C4:] For each $i$, the conditioning index vector $q(i) \subset (1,\ldots,i-1)$ for $\by_i$.
\item[    C5:] For each $i$, the partitioning of $q(i)$ into $q_y(i)$ and $q_z(i)$; that is, for each $j \in q(i)$, whether $\by_i$ should condition on $\by_j$ or $\bz_j$.
\end{description}
For C1, the default choice is often to set $\locs$ equal to the observed locations. A major focus of this article is C5, which is discussed in Section \ref{sec:sgv}. In Section \ref{sec:ordcond}, we provide some insights into C2--C4, and in Section \ref{sec:numerical} we explore C3--C5 numerically.

\subsection{Connections to directed acyclic graphs \label{sec:dags}}

There are strong connections between the Vecchia approach and directed acyclic graphs \citep[DAGs; cf.][]{Datta2016}. A brief review of DAGs is provided in Appendix \ref{app:dags}. The conditional-independence structure implied by the Vecchia approximation in \eqref{eq:vecchia1} can be well represented by a DAG. Viewing $\bx_1,\ldots,\bx_b$ as the vertices in the DAG, we have $\bx_j \to \bx_i$ if and only if $j \in g(i)$, and so $\bx_{g(i)}$ is the vector formed by the set of all parents of $\bx_i$. Note that, because Vecchia approximations allow conditioning only on previous variables in the ordering, we always have $\bx_i \not\to \bx_j$ if $i>j$.
DAG representations are illustrated in Figure \ref{fig:toyillustration}.
We will use this connection between Vecchia approaches and DAGs to study the sparsity of the matrices needed for inference in Section \ref{sec:sparsity}.


\section{Existing methods as special cases \label{sec:existing}}

Many existing GP approximations fall into the framework described above. Each of these special cases corresponds to particular choices of C1--C5. We give some examples here. Most of these examples are illustrated in Figure \ref{fig:toyillustration}.

\subsection{Standard Vecchia and extensions \label{sec:vecchia}}

Vecchia's original approximation \citep{Vecchia1988} specifies singleton vectors ($r_i=1$), ordering locations by a spatial coordinate (henceforth referred to as coord ordering), and conditioning only on observations $\bz_i$ (as opposed to latent $\by_i$); that is, $q_z(i) = q(i)$, $q_y(i) = \emptyset$, and $o=(1,\ldots,\ell)$. Using \eqref{eq:induceddist}, this results in the approximation
\begin{align*}
\textstyle \adens(\bz) = \int \prod_{i=1}^\ell \dens(\bz_i|\by_i) \dens(\by_i | \bz_{q(i)})  d\by = \prod_{i=1}^\ell \int \dens(\bz_i|\by_i,\bz_{q(i)}) \dens(\by_i | \bz_{q(i)} ) d\by_i = \prod_{i=1}^\ell \dens(\bz_i|\bz_{q(i)}),
\end{align*}
where $\dens(\bz_i|\by_i) = \dens(\bz_i|\by_i,\bz_{q(i)})$ because $\bz_i$ is conditionally independent of $\bz_{q(i)}$ given $\by_i$.
\citet{stein2004} recommended including in $q(i)$ the indices of some close and some far-away observations, and grouping observations (i.e., $r_i>1$) for computational advantages. \citet{Guinness2016a} considered an adaptive grouping scheme, and discovered that ordering schemes other than coord ordering can improve approximation accuracy. \citet{Vecchia1988} and \citet{stein2004} focus on likelihood approximation, but \citet{Guinness2016a} also considers spatial prediction via conditional simulation. Further extensions were proposed in \citet{Sun2016} and \citet{Huang2016}. Some asymptotics are provided in \citet{Zhang2012}.

\subsection{Nearest-neighbor GP (NNGP) \label{sec:nngp}}

The NNGP \citep{Datta2016,Datta2016b,Datta2016a} considers explicit data models (such as the additive Gaussian noise assumed here), and conditions only on latent variables: $q_y(i) = q(i)$ and $q_z(i) = \emptyset$. A GP is defined by setting $r_i = 1$, $\locs = (\locs_1,\ldots,\locs_{\ell_y+\ell_z})$, $o = (\ell_y+1,\ldots,\ell_y+\ell_z)$, and enforcing the constraint that $q(i) \subset (1,\ldots,\ell_y)$ for all $i$. This means that variables at the observed locations can condition only on variables in the knot set $(\locs_1,\ldots,\locs_{\ell_y})$.

\subsection{Independent blocks \label{sec:indblocks}}

The simplest special case is given by empty conditioning index vectors $q(i) = \emptyset$ for every $i$:
\begin{align*}
\textstyle \adens(\bz) = \int \prod_{i=1}^\ell \dens(\bz_i|\by_i) \dens(\by_i )  d\by =  \prod_{i=1}^\ell \int \dens(\bz_i|\by_i) \dens(\by_i )  d\by = \prod_{i=1}^\ell \dens(\bz_i),
\end{align*}
which treats the $\ell$ subvectors $\bz_1,\ldots,\bz_\ell$ independently and assumes $o=(1,\ldots,\ell)$. When each subvector or block corresponds to a contiguous subregion in space, \citet{Stein2013a} showed that this approximation can be quite competitive as a surrogate for the likelihood, and also computationally inexpensive since each term in the product incurs small computational cost and can be computed in parallel. One difficulty with this approach is characterization of joint uncertainties in predictions, due to the independence assumption embedded in the approximation.

\subsection{Latent autoregressive process of order $m$ (AR(m))}

Latent (vector-)AR processes of order $m$, also called state-space models, are common in time-series settings. They condition only on the latest $m$ sets of latent variables for some ordering: $q(i) = q_y(i) = (i-m,\ldots,i-1)$. Inference in this type of model is typically carried out using the Kalman filter and smoother \citep{Kalman1960,Rauch1965}.

\subsection{Modified predictive process (MPP) \label{sec:mpp}}

The MPP \citep[][]{Finley2009} is obtained by defining $\locs_1$ as a vector of ``knot'' locations, typically $1 \notin o$, and for all $i>1$, set $r_i = 1$ and $q(i) = q_y(i) = 1$. This means that all variables condition on the same vector $\by_1$ and are assumed to be conditionally independent.

\subsection{Full-scale approximation (FSA) \label{sec:fsa}}

As in the MPP, the FSA-block \citep{Snelson2007,Sang2011a} is obtained by designating a common conditioning vector $\by_1$, and setting $q(i) = q_y(i) = 1$ for $i>1$. However, the FSA allows $r_i > 1$ and groups all remaining variables by spatial region as in the independent-blocks case in Section \ref{sec:indblocks}.

When $q(i) = q_y(i) = 1$ for all $i>1$ in general Vecchia, we have $\adens(\by) = \dens(\by_1) \prod_{i=2}^\ell \dens(\by_i | \by_1)$, and so $\adens(\by_1) = \dens(\by_1)$, $\adens(\by_i) = \int \dens(\by_1)  \dens(\by_i|\by_1) d\by_1 = \dens(\by_i)$ for $i>1$ (i.e., the marginal distributions are exact), and $\adens(\by_i,\by_j) = \int \dens(\by_1) \dens(\by_i|\by_1) \dens(\by_j|\by_1) d\by_1$. Hence, as for the FSA, we have $\widehat\var(\by_i)
= \var(\by_i)$, and for $i\neq j > 1$,
\begin{align*}
\widehat\cov(\by_i,\by_j) & \textstyle= \int \int \int \by_i \by_j' \dens(\by_1) \dens(\by_i|\by_1) \dens(\by_j|\by_1) d\by_i d\by_j d\by_1 \\
  & \textstyle= \int \left(\int\by_i\dens(\by_i|\by_1) d\by_i\right) \left(\int\by_j'\dens(\by_j|\by_1) d\by_j\right) \dens(\by_1) d\by_1 \\
  & \textstyle= \cov\big(\E(\by_i|\by_1),\E(\by_j|\by_1)\big),
\end{align*}
where $\E(\by_i|\by_1)$ is the predictive process with knots $\locs_1$ evaluated at $\locs_i$.
The recently proposed smoothed FSA \citep{Zhang2018}, which is billed as a generalization of the Vecchia approach, can also be viewed as a special case of general Vecchia, for which the conditioning vectors include some nearby blocks in addition to the knot vector $\by_1$.

\subsection{Multi-resolution approximation (MRA)\label{sec:mra}}

The MRA  \citep[][]{Katzfuss2015} is an iterative extension of the FSA-block, in which the domain $\domain$ is iteratively partitioned into $J$ subregions, and we select $r_i$ variables in each of the resulting subregions, such that $\locs_i \subset \domain_i$. For example, if $J=4$, let $\domain_1 = \domain$, and define $\{\domain_2,\ldots,\domain_5\}$ to be a partition of $\domain_1$, $\{\domain_6,\ldots,\domain_9\}$ to be a partition of $\domain_2$, $\{\domain_{10},\ldots,\domain_{13}\}$ to be a partition of $\domain_3$, and so forth. Set
$
q(i) = \{ j : \domain_i \subset \domain_j \},
$
and $q_y(i) = q(i)$, so that the conditioning vector consists of latent variables associated with locations above it in the hierarchy.

The FSA and MPP are special cases of the MRA. All three methods allow latent variables at unobserved locations, such that $\locs$ is different from the set of observed locations, which can be handled in our framework by $o \subset (1,\ldots,\ell)$.

\subsection{Related approach: Composite likelihood (CL)}

CL is a popular approach for fast GP inference. \citet{Varin2011} categorize CL methods as either marginal or conditional. A common marginal CL approach is pairwise blocks, which approximates the likelihood as
$
\adens(\bz) = \prod \dens(\bz_i,\bz_j),
$
where the product is often over all pairs $(i,j)$ of neighboring blocks \citep[e.g.,][]{Eidsvik2012}. 
Conditional CL is an approximation of the form \eqref{eq:approxdecomp}, except that more general conditioning index vectors $g(i) \subset (1,\ldots,i-1,i+1,\ldots,\ell)$ are considered.
In contrast to Vecchia approaches, CL-based inference is not generally guaranteed to become exact as the number of considered pairs or conditioning variables increases, and $\adens(\bz)$ is not generally guaranteed to be a valid joint density, which can make CL-based Bayesian inference difficult \citep[e.g.,][]{Shaby2014}.
While the Vecchia approaches reviewed in Section \ref{sec:vecchia} are special cases of conditional CL, 
our general Vecchia framework in \eqref{eq:vecchia1} is not a CL approach, in that  is defined on $\bx$, not on $\bz$ alone, and so it generally cannot be written in the form \eqref{eq:approxdecomp}.
Simulation studies comparing parameter estimation using Vecchia and CL approaches can be found in Appendix \ref{sec:clcomp}.

 \begin{figure}
 	\begin{subfigure}{.55\textwidth}
 	\centering
 	Standard Vecchia\\
 	\includegraphics[trim={62mm 25mm 62mm 20mm},clip,width =1\linewidth]{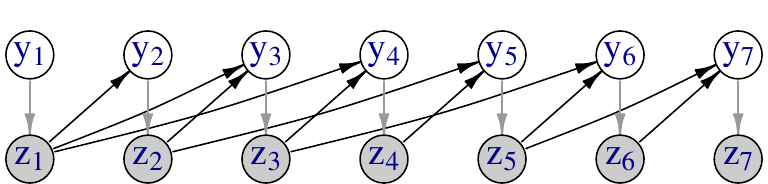} 
 	\end{subfigure}%
 \hfill
 	\begin{subfigure}{.2\textwidth}
 	\centering
 	\includegraphics[width =1\linewidth]{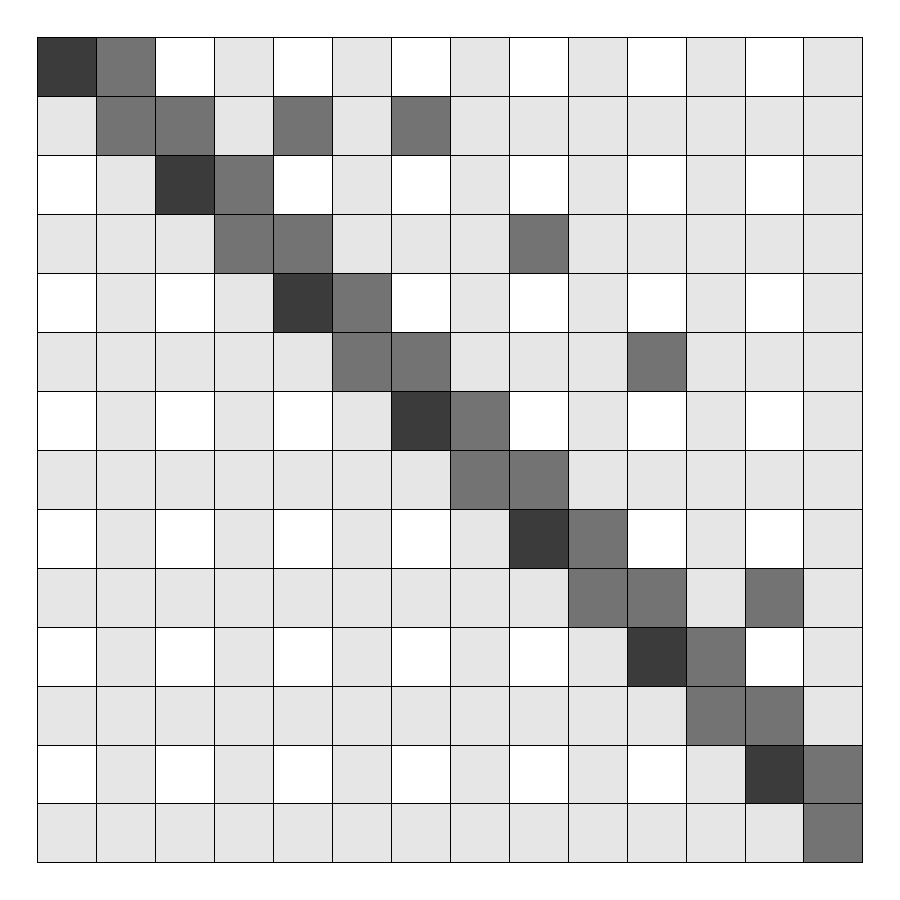}
 	\end{subfigure}%
 \hfill
 	\begin{subfigure}{.2\textwidth}
 	\centering
 	\includegraphics[width =1\linewidth]{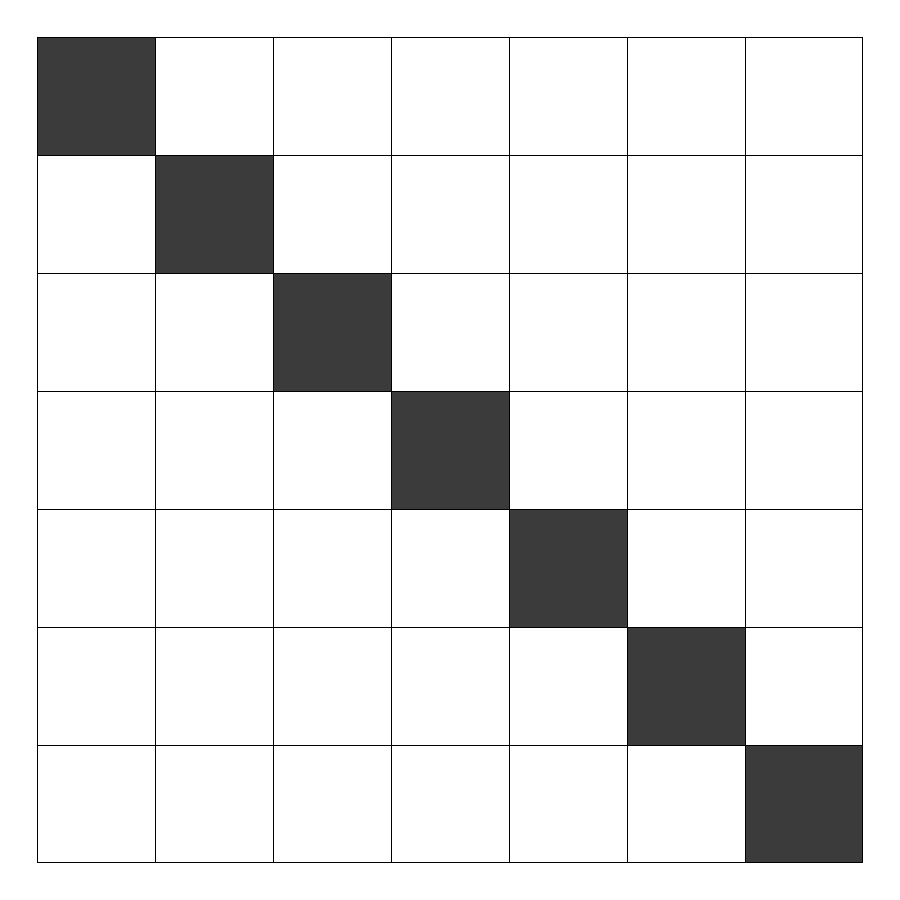}
 	\end{subfigure}

\hrulefill

 	\begin{subfigure}{.55\textwidth}
 	\centering
 	Latent Vecchia (NNGP)\\
 	\includegraphics[trim={62mm 25mm 62mm 20mm},clip,width =1\linewidth]{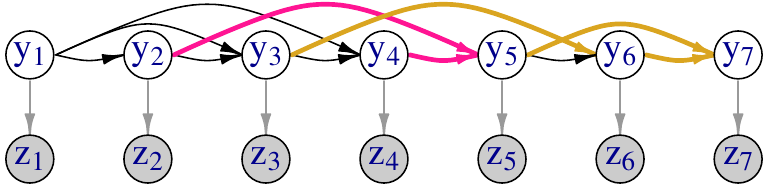} 
 	\end{subfigure}%
 \hfill
 	\begin{subfigure}{.2\textwidth}
 	\centering
 	\includegraphics[width =1\linewidth]{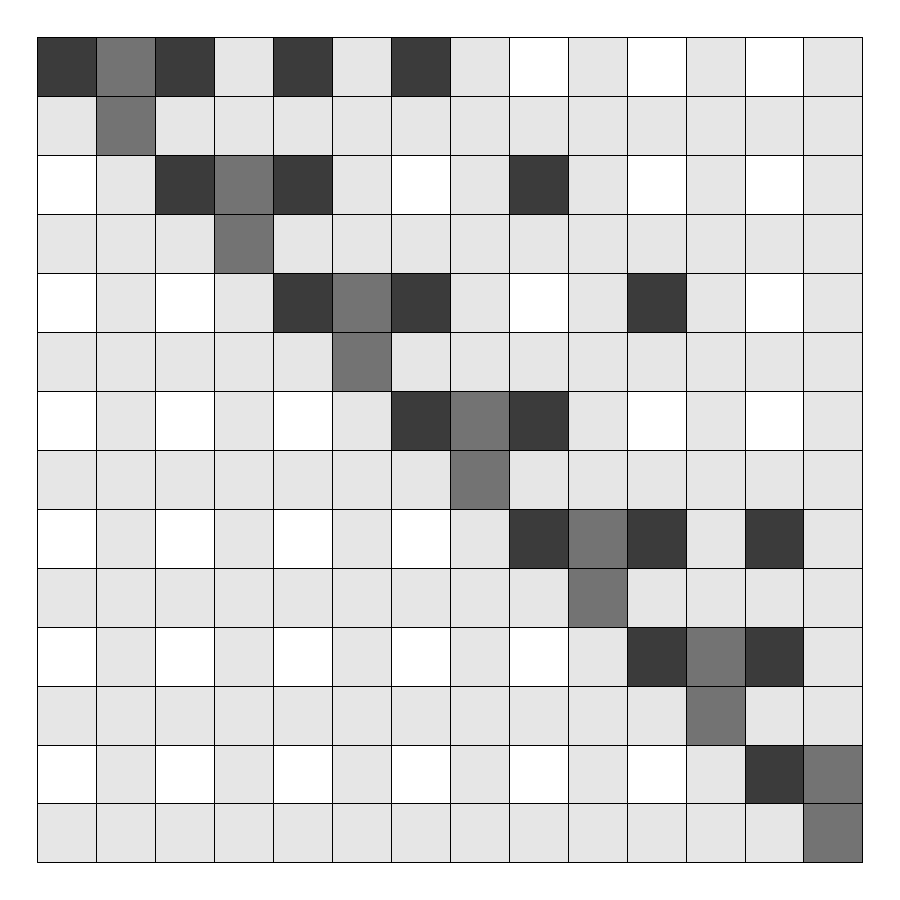}
 	\end{subfigure}%
 \hfill
 	\begin{subfigure}{.2\textwidth}
 	\centering
 	\includegraphics[width =1\linewidth]{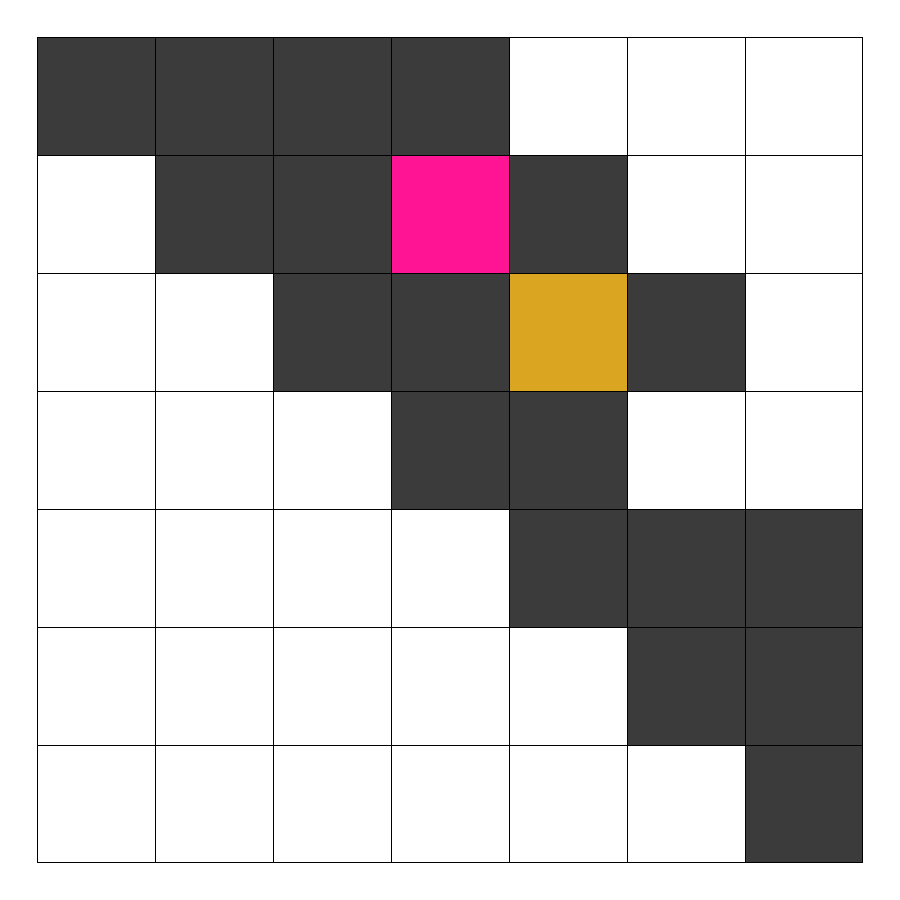}
 	\end{subfigure}

\hrulefill

 	\begin{subfigure}{.55\textwidth}
 	\centering
 	Sparse general Vecchia (SGV)\\
  	\includegraphics[trim={62mm 25mm 62mm 20mm},clip,width =1\linewidth]{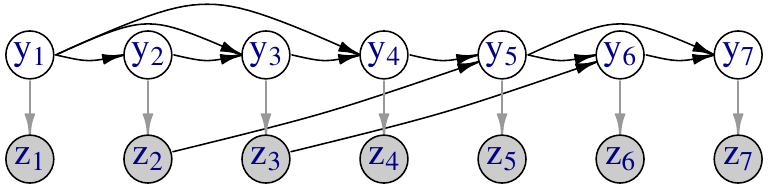} 
 	\end{subfigure}%
 \hfill
 	\begin{subfigure}{.2\textwidth}
 	\centering
 	\includegraphics[width =1\linewidth]{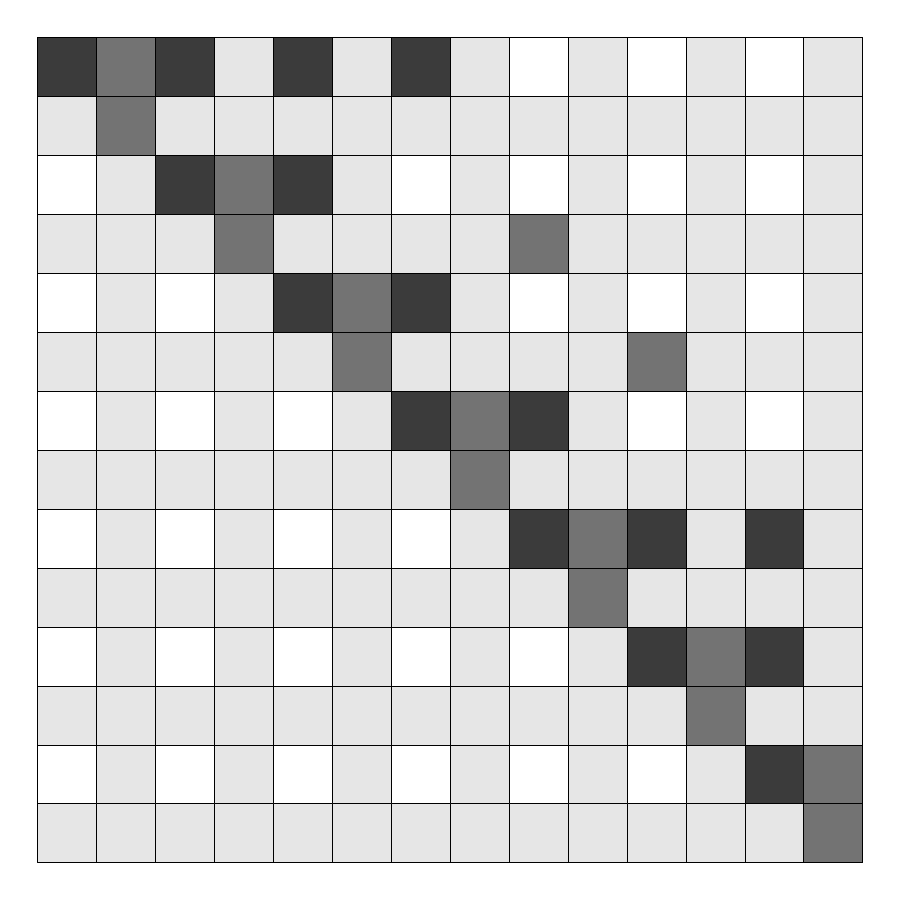}
 	\end{subfigure}%
 \hfill
 	\begin{subfigure}{.2\textwidth}
 	\centering
 	\includegraphics[width =1\linewidth]{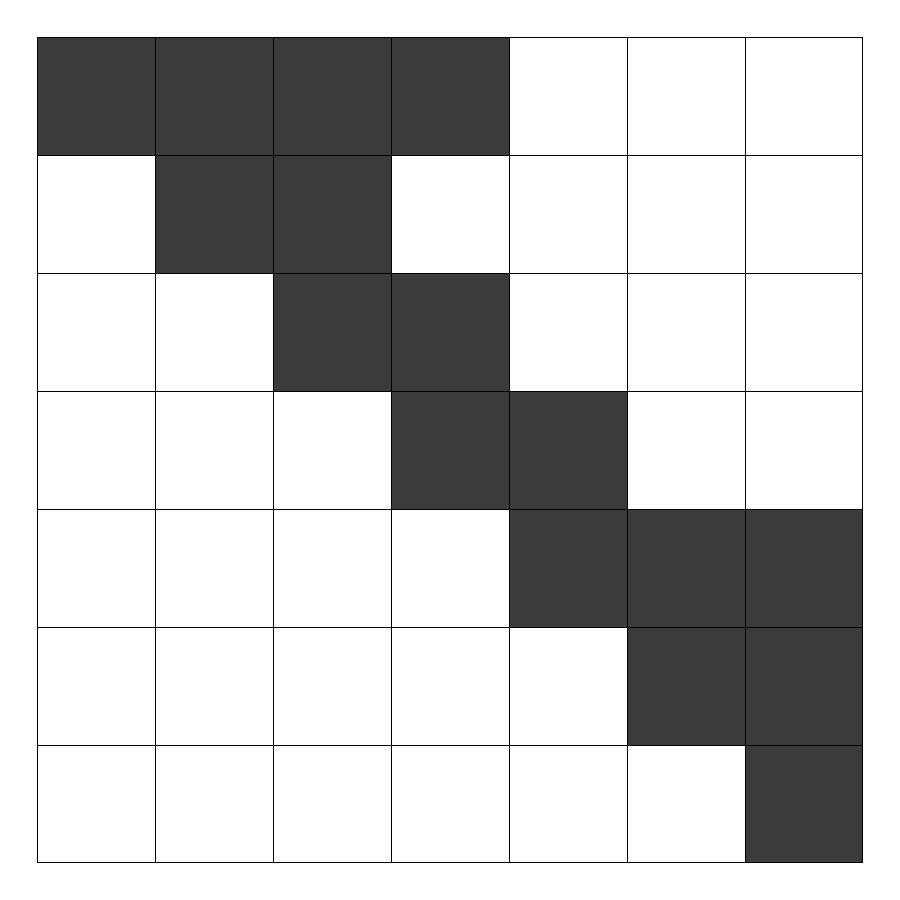}
 	\end{subfigure}

\hrulefill

 	\begin{subfigure}{.55\textwidth}
 	\centering
 	AR(2)\\
  	\includegraphics[trim={62mm 25mm 62mm 20mm},clip,width =1\linewidth]{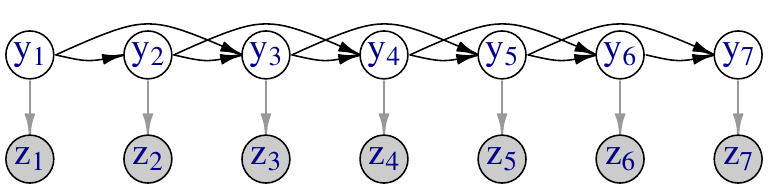} 
 	\end{subfigure}%
 \hfill
 	\begin{subfigure}{.2\textwidth}
 	\centering
 	\includegraphics[width =1\linewidth]{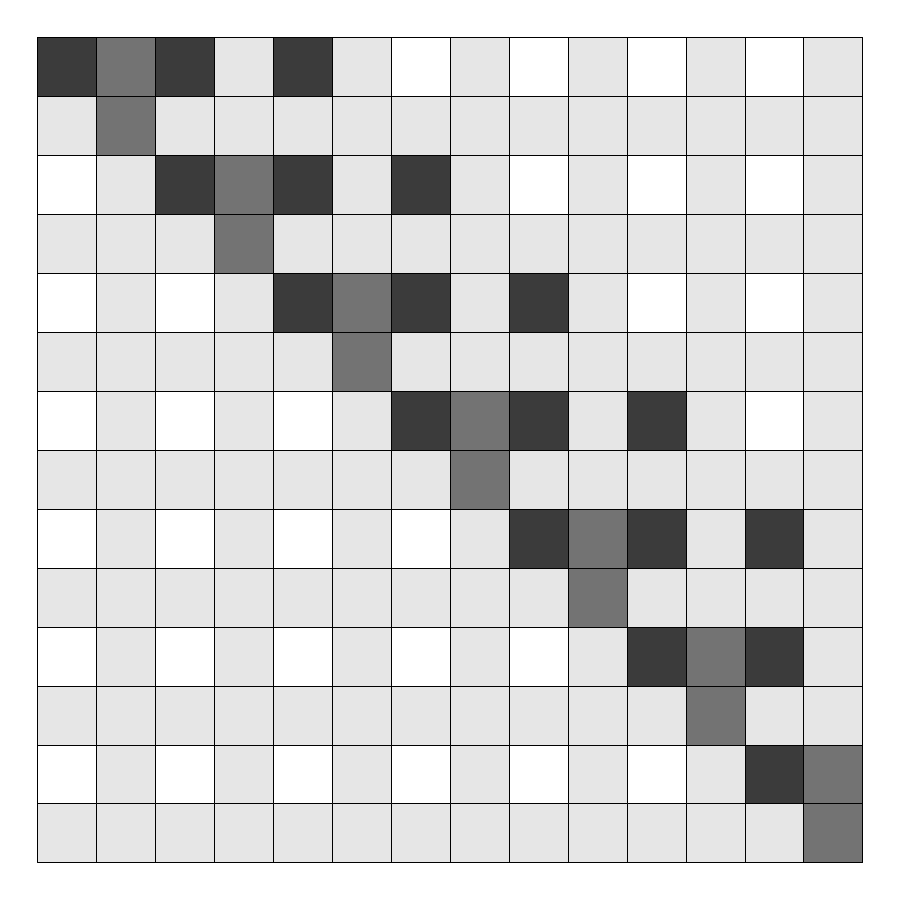}
 	\end{subfigure}%
 \hfill
 	\begin{subfigure}{.2\textwidth}
 	\centering
 	\includegraphics[width =1\linewidth]{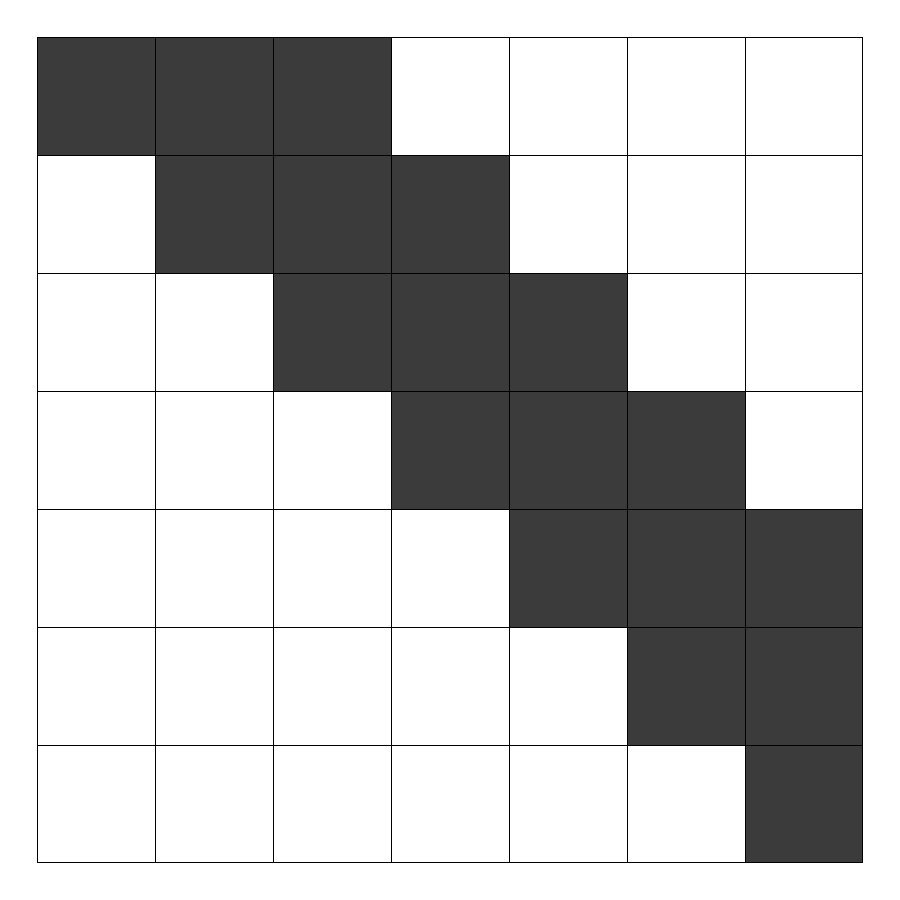}
 	\end{subfigure}

\hrulefill

 	\begin{subfigure}{.55\textwidth}
 	\centering
 	Full-scale approximation (FSA)\\
 	\includegraphics[trim={62mm 22mm 62mm 22mm},clip,width =1\linewidth]{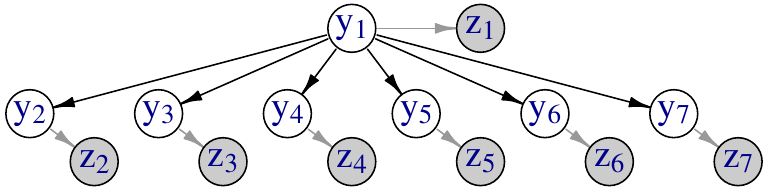} 
 	\end{subfigure}%
 \hfill
 	\begin{subfigure}{.2\textwidth}
 	\centering
 	\includegraphics[width =1\linewidth]{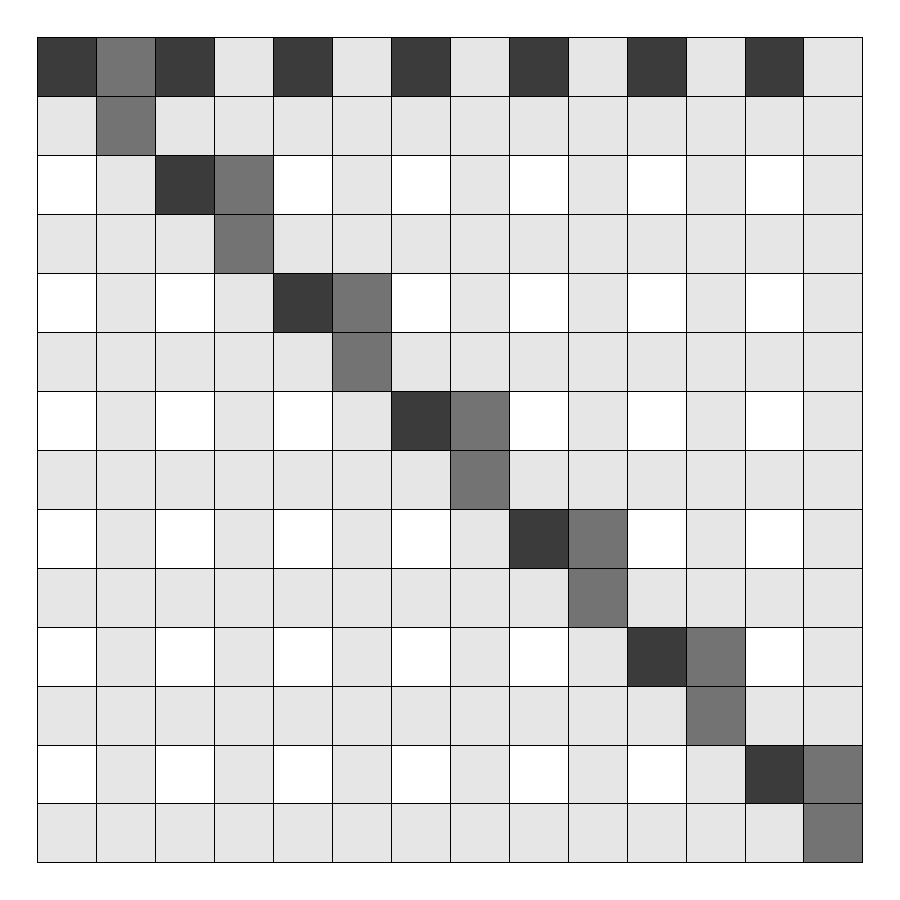}
 	\end{subfigure}%
 \hfill
 	\begin{subfigure}{.2\textwidth}
 	\centering
 	\includegraphics[width =1\linewidth]{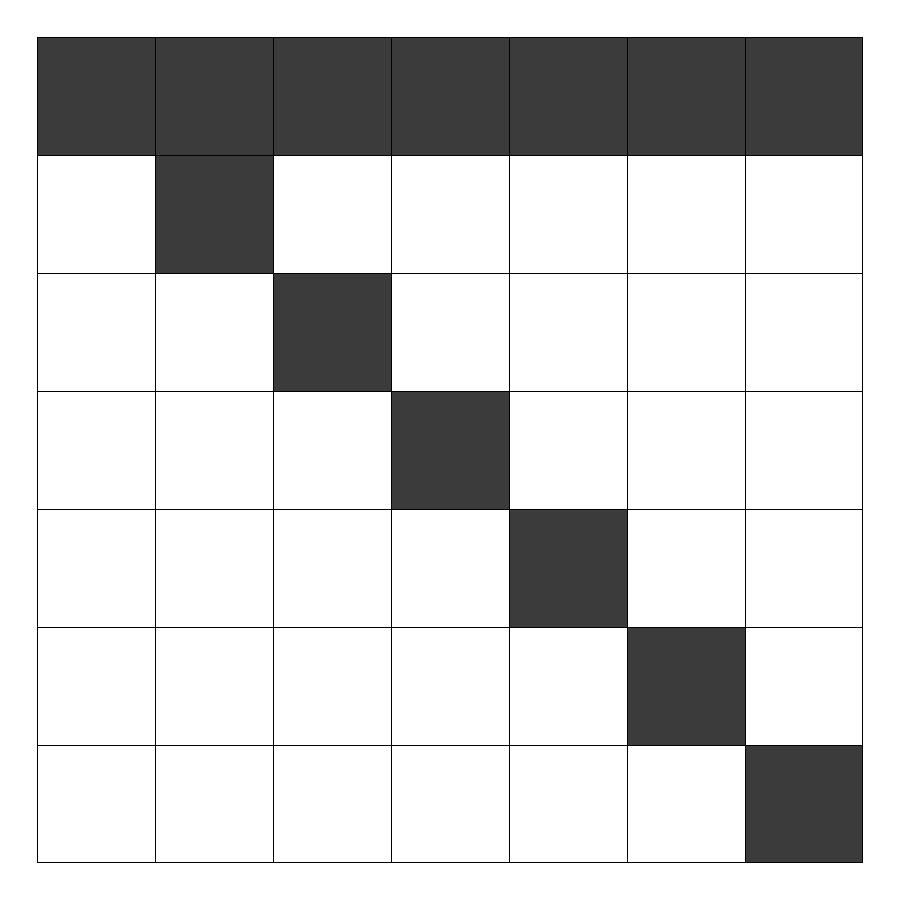}
 	\end{subfigure}

\hrulefill

 	\begin{subfigure}{.55\textwidth}
 	\centering
 	Multi-resolution approximation (MRA), $J=2$
  	\includegraphics[trim={62mm 22mm 62mm 22mm},clip,width =1\linewidth]{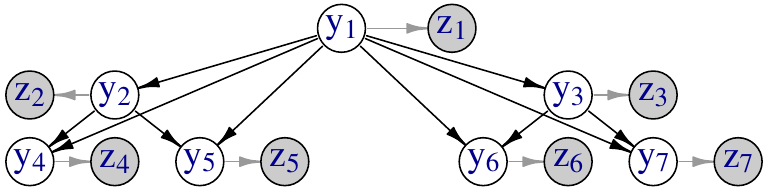} 
 	\end{subfigure}%
 \hfill
 	\begin{subfigure}{.2\textwidth}
 	\centering
 	\includegraphics[width =1\linewidth]{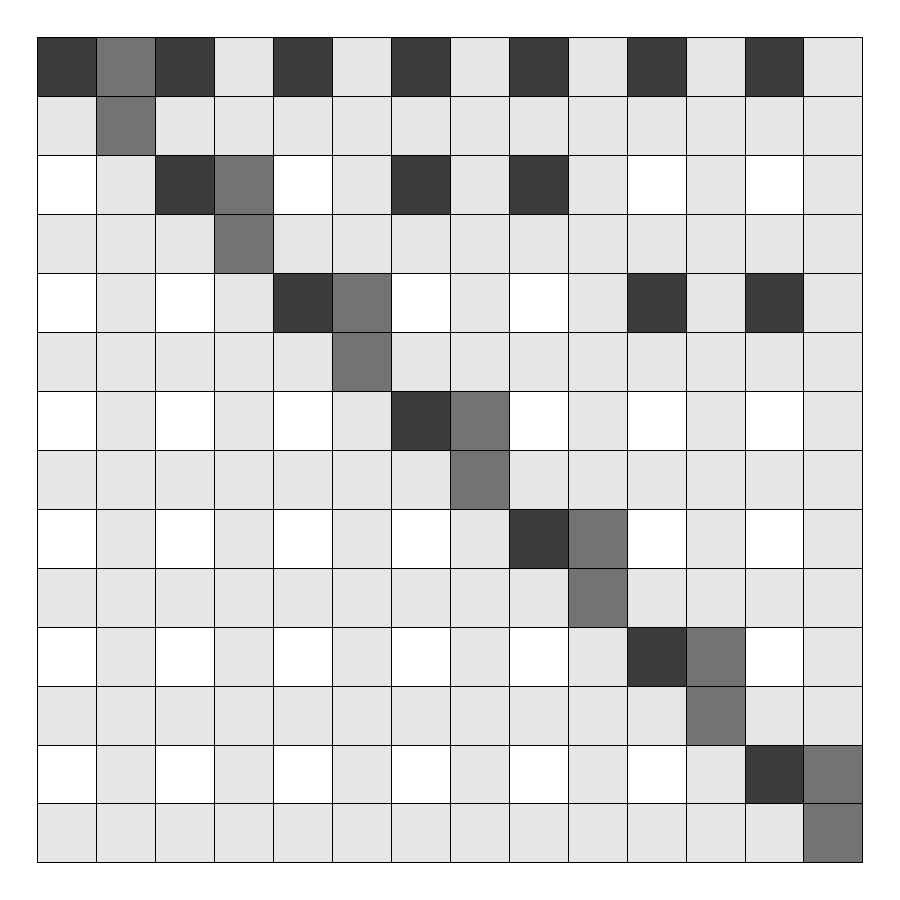}
 	\end{subfigure}%
 \hfill
 	\begin{subfigure}{.2\textwidth}
 	\centering
 	\includegraphics[width =1\linewidth]{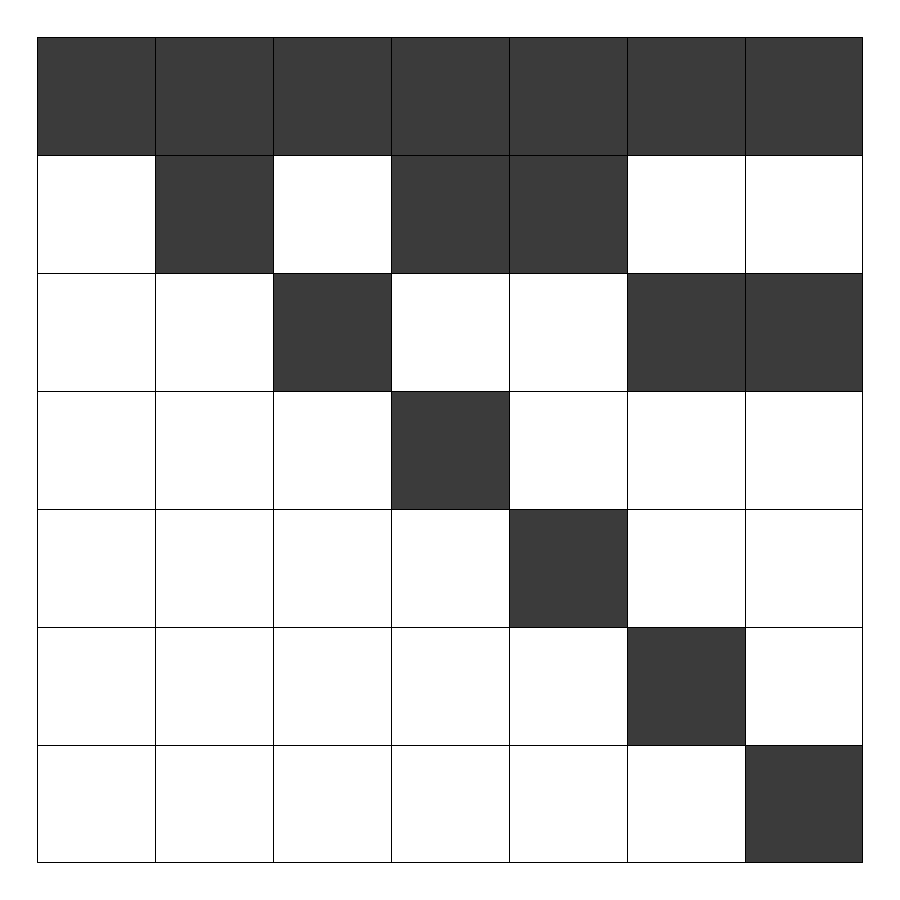}
 	\end{subfigure}%
 \caption{Toy examples of special cases of our general Vecchia approach (see Section \ref{sec:existing}) with $\ell=7$ and $o=(1,\ldots,7)$, including sparse general Vecchia (Section \ref{sec:sgv}). First column: DAGs (see Section \ref{sec:dags}). Second column: sparsity of $\bU$ (elements corresponding to $\bz_i$ in gray). Third column: sparsity of $\bV$ (Section \ref{sec:sparsity}). Computational complexity depends on the number of off-diagonal nonzeros in each column of $\bU$ and $\bV$ (Section \ref{sec:complexity}). For all methods except latent Vecchia, these numbers are at most $m=2$. For latent Vecchia, two nonzero-producing paths and the resulting nonzeros are highlighted (see Section \ref{sec:sgv}).}
 \label{fig:toyillustration}
 \end{figure}

\section{Inference and computations \label{sec:computation}}

In this section, we describe matrix representations of general Vecchia approximations, which enable fast inference. Further, we examine the sparsity of the involved matrices and derive the computational complexity.

\subsection{Matrix representations of general Vecchia}

For any two subvectors $\bx_i$ and $\bx_j$ of $\bx$, we write $C(\bx_i,\bx_j) = E(\bx_i \bx_j')$, the cross-covariance between $\bx_i$ and $\bx_j$. This gives $C(\by_i,\by_j) = C(\bz_i,\by_j) = K(\locs_i,\locs_j)$, $C(\bz_i,\bz_j) = K(\locs_i,\locs_j)$ for $i\neq j$, and $C(\bz_i,\bz_i) = K(\locs_i,\locs_i) + \tau^2 \bI_{r_i}$.

We can write the general Vecchia approximation in \eqref{eq:vecchia1} as
\begin{equation}
\label{eq:densproduct}
\adens(\bx) = \prod_{i=1}^b \dens(\bx_i | \bx_{g(i)}) = \prod_{i=1}^b \normal(\bx_i | \bB_i\bx_{g(i)},\bD_i \big)
\end{equation}
where $\bB_i= C(\bx_i,\bx_{g(i)}) C(\bx_{g(i)},\bx_{g(i)})^{-1}$ and $\bD_i = C(\bx_i,\bx_i) - \bB_i C(\bx_{g(i)},\bx_i)$. We view $\bB_i$ as a block matrix, and so $(\bB_i)_{\#(j,g(i))}$ is the block of $\bB_i$ corresponding to $\bx_j$ when $j \in g(i)$.

For any symmetric, positive-definite matrix $\bA$, let $\chol(\bA)$ be the lower-triangular Cholesky factor of $\bA$, and let $\bP$ be a permutation matrix so that $\bP \bA$ reorders the rows of matrix $\bA$ in reverse order. Then, we call $\rchol(\bA) \colonequals \bP(\chol(\bP\bA\bP))\bP$ the reverse Cholesky factor of $\bA$ (i.e., the row-column reversed Cholesky factor of the row-column reversed $\bA$). The following proposition is a standard result for the multivariate normal distribution:
\begin{prop}
\label{prop:matrixrep}
For the density in \eqref{eq:densproduct}, we have $\adens(\bx) = \normal_n(\bx | \bfzero, \widehat\bC)$, where $\widehat \bC^{-1} = \bU \bU'$, $\bU$ is a sparse upper triangular $b \times b$ block matrix with $(j,i)$th block
\begin{equation}
\label{eq:U}
\bU_{ji} = \begin{cases} \bD_i^{-1/2}, & i=j,\\
-(\bB_{i})_{\#(j,g(i))}'\,\bD_i^{-1/2}, & j \in g(i), \\
\bfzero, &\textnormal{otherwise}, \end{cases}
\end{equation}
and $\bD_i^{-1} = \bD_i^{-1/2}(\bD_i^{-1/2})'$. Further, $\bU = \rchol(\widehat\bC^{-1})$ is the reverse Cholesky factor of $\widehat\bC^{-1}$.
\end{prop}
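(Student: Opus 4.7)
The plan is to write the Vecchia approximation in \eqref{eq:densproduct} as a linear Gaussian structural equation model, then read off the inverse covariance from the resulting system and match it to $\bU\bU'$.

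First I would rewrite each factor $\normal(\bx_i\mid \bB_i\bx_{g(i)},\bD_i)$ as the stochastic equation $\bx_i = \bB_i \bx_{g(i)} + \bfeta_i$ with $\bfeta_i \sim \normal(\bfzero,\bD_i)$ independent across $i$. Assembling these over $i=1,\ldots,b$ and padding each $\bB_i$ with zero blocks for $j\notin g(i)$, I get a single block-matrix identity $(\bI-\bB)\bx = \bfeta$, where $\bB$ is the strictly block-lower-triangular matrix with $(i,j)$th block equal to $(\bB_i)_{\#(j,g(i))}$ when $j\in g(i)$ and $\bfzero$ otherwise, and $\bfeta \sim \normal(\bfzero,\bD)$ with $\bD=\blockdiag(\bD_1,\ldots,\bD_b)$. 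Since $\bI-\bB$ is block lower triangular with identity diagonal blocks, it is invertible, so $\bx=(\bI-\bB)^{-1}\bfeta$, which is multivariate Gaussian with mean zero and covariance $\widehat\bC=(\bI-\bB)^{-1}\bD(\bI-\bB)^{-T}$. Inverting gives
\begin{equation*}
\widehat\bC^{-1} \;=\; (\bI-\bB)^{T}\,\bD^{-1}\,(\bI-\bB).
\end{equation*}

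Next I would define $\bU \colonequals (\bI-\bB)^{T}\,\bD^{-1/2}$, where $\bD^{-1/2}=\blockdiag(\bD_1^{-1/2},\ldots,\bD_b^{-1/2})$ with each $\bD_i^{-1/2}$ chosen so that $\bD_i^{-1}=\bD_i^{-1/2}(\bD_i^{-1/2})'$. Then $\bU\bU' = (\bI-\bB)^{T}\bD^{-1}(\bI-\bB) = \widehat\bC^{-1}$ as required. To check that this $\bU$ matches the block formula in \eqref{eq:U}, I would carry out the block multiplication: since $\bD^{-1/2}$ is block diagonal, the $(j,i)$th block of $\bU$ equals $((\bI-\bB)^T)_{ji}\,\bD_i^{-1/2}$. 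For $j=i$ this is $\bI\cdot\bD_i^{-1/2}=\bD_i^{-1/2}$; for $j\in g(i)$ this is $-(\bB_i)_{\#(j,g(i))}'\,\bD_i^{-1/2}$; for all other $j$ (in particular every $j>i$) the block vanishes. Hence $\bU$ is block upper triangular with exactly the entries claimed in \eqref{eq:U}.

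Finally, for the identification $\bU=\rchol(\widehat\bC^{-1})$ I would choose $\bD_i^{-1/2}$ to be the upper-triangular reverse-Cholesky factor of $\bD_i^{-1}$ (it still satisfies $\bD_i^{-1}=\bD_i^{-1/2}(\bD_i^{-1/2})'$). Then each diagonal block of $\bU$ is upper triangular at the scalar level, and combined with the block upper-triangular structure established above this makes the entire matrix $\bU$ upper triangular with positive diagonal entries. The standard uniqueness of the Cholesky decomposition applied to $\bP\widehat\bC^{-1}\bP$ — together with the definition $\rchol(\bA)=\bP\,\chol(\bP\bA\bP)\,\bP$ and the identity $\bP^2=\bI$ — then forces $\bU=\rchol(\widehat\bC^{-1})$.

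The only place that really requires care is step two, where one must verify block-wise that $(\bI-\bB)^T\bD^{-1/2}$ reproduces \eqref{eq:U} and, in particular, that no spurious nonzero blocks appear above the diagonal beyond those indexed by $g(i)$; everything else is bookkeeping and an appeal to uniqueness of the Cholesky factor.
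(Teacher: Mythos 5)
Your proof is correct and takes essentially the same route as the paper: you identify $\widehat\bC^{-1}=(\bI-\bB)^{T}\bD^{-1}(\bI-\bB)=\bU\bU'$ and then invoke uniqueness of the Cholesky factor of $\bP\widehat\bC^{-1}\bP$, exactly as the paper does, the only difference being that you package the computation as a structural equation model $(\bI-\bB)\bx=\bfeta$ while the paper works directly with the quadratic form $w=\sum_i\ba_i'\ba_i$ in the exponent of the density --- the same whitened-residual calculation in different notation. Your explicit choice of $\bD_i^{-1/2}$ as the upper-triangular factor with positive diagonal is a slightly more careful handling of a detail the paper leaves implicit.
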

All proofs can be found in Appendix \ref{app:proofs}.


\subsection{Likelihood \label{sec:likelihood}}

By integrating $\adens(\bx)$ with respect to the latent $\by$, the general Vecchia approximation implies a distribution for the observed vector $\bz_o$ as in \eqref{eq:induceddist}. For large $n_y$, numerical integration with respect to the $n_y$-dimensional vector $\by$ is challenging \citep[see][]{Finley2017}. Hence, we consider the analytically integrated density instead:
\begin{prop}
\label{prop:likelihood}
The general Vecchia likelihood can be computed as:
\begin{equation}
\label{eq:likelihood}
-2 \log \adens(\bz_o) = \sum_{i=1}^b \log |\bD_i| + 2\sum_{i=1}^\ell \log |\bV_{ii}| + \tilde\bz'\tilde\bz - (\bV^{-1}\bU_Y \tilde\bz)'(\bV^{-1}\bU_Y\tilde\bz) + n_z \log(2\pi),
\end{equation}
where $\bV \colonequals \rchol(\bW)$, $\bW \colonequals \bU_Y \bU_Y'$, $\tilde\bz \colonequals \bU_Z'\bz_o$,
and $\bU_Y \colonequals \bU_{\#( \by, \bx )\,\all}$ and $\bU_Z \colonequals \bU_{\#( \bz_o, \bx )\,\all}$ consist of the rows of $\bU$ corresponding to $\by$ and $\bz_o$, respectively.
\end{prop}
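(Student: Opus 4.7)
The plan is to start from Proposition 1, which gives $\adens(\bx) = \normal_n(\bx|\bfzero,\widehat\bC)$ with $\widehat\bC^{-1} = \bU\bU'$, and then analytically integrate out the latent $\by$. Writing the quadratic form $\bx'\widehat\bC^{-1}\bx = \|\bU'\bx\|^2$ and splitting the rows of $\bU$ according to the positions of $\by$ and $\bz_o$ in $\bx$, I get $\bU'\bx = \bU_Y'\by + \bU_Z'\bz_o$. Expanding yields
\begin{equation*}
\|\bU'\bx\|^2 = \by'\bW\by + 2\tilde\bz'\bU_Y'\by + \tilde\bz'\tilde\bz,
\end{equation*}
where $\bW = \bU_Y\bU_Y'$ and $\tilde\bz = \bU_Z'\bz_o$ by definition.

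Next I would complete the square in $\by$, writing the $\by$-dependent part of the exponent as $(\by - \bfmu)'\bW(\by - \bfmu) - \bfmu'\bW\bfmu$ with $\bfmu = -\bW^{-1}\bU_Y\tilde\bz$, and then apply the standard Gaussian integral $\int \exp(-\tfrac{1}{2}(\by-\bfmu)'\bW(\by-\bfmu))\,d\by = (2\pi)^{n_y/2}|\bW|^{-1/2}$. Combining this with the prefactor $|\bU|/(2\pi)^{n/2}$ from $\adens(\bx)$ (using $|\widehat\bC|^{-1/2} = |\bU|$), I obtain
\begin{equation*}
-2\log\adens(\bz_o) = -2\log|\bU| + \log|\bW| + \tilde\bz'\tilde\bz - \tilde\bz'\bU_Y'\bW^{-1}\bU_Y\tilde\bz + n_z\log(2\pi).
\end{equation*}

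To match the stated formula, I then clean up the three matrix quantities. The diagonal block structure $\bU_{ii} = \bD_i^{-1/2}$ from Proposition 1 together with upper triangularity gives $-2\log|\bU| = \sum_{i=1}^b \log|\bD_i|$. For $\bW$, the identity $\rchol(\bA)\rchol(\bA)' = \bA$ (which follows directly from the definition $\rchol(\bA) = \bP\chol(\bP\bA\bP)\bP$ and $\bP^2 = \bI$) shows $\bV\bV' = \bW$ with $\bV$ upper triangular, so $\log|\bW| = 2\sum_{i=1}^\ell \log|\bV_{ii}|$ and $\bW^{-1} = \bV^{-T}\bV^{-1}$; this last identity converts the quartic term into $(\bV^{-1}\bU_Y\tilde\bz)'(\bV^{-1}\bU_Y\tilde\bz)$.

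The steps are all standard, so there is no single hard part, but the main place to be careful is the bookkeeping in step one: one must verify that the permuted row partition of $\bU$ really does give $\bU'\bx = \bU_Y'\by + \bU_Z'\bz_o$ with the correct alignment of components of $\by$ and $\bz_o$ to the rows of $\bU_Y$ and $\bU_Z$. A second subtle point is confirming that $\bV = \rchol(\bW)$ is genuinely upper triangular so that $\log|\bV| = \sum_i \log|\bV_{ii}|$; this is where the reverse Cholesky (as opposed to the ordinary Cholesky) is important, and it is what allows $\bV$ to inherit sparsity compatible with the ordering used in $\bU$ — a point that will matter in Section \ref{sec:sparsity}.
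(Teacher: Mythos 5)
Your proof is correct and follows essentially the same route as the paper's: both reduce the problem to identifying the conditional precision $\bW = \bU_Y\bU_Y'$ and mean $\bfmu = -\bW^{-1}\bU_Y\tilde\bz$ of $\by$ given $\bz_o$, and then use $\bV = \rchol(\bW)$ to handle the resulting determinant and quadratic form. The paper merely packages the marginalization differently---via the identity $\adens(\bz_o) = \adens(\bx)/\adens(\by\,|\,\bz_o)$ evaluated at $\by = \bfzero$, citing Rue and Held for the conditional precision and mean---whereas you complete the square and integrate explicitly; the underlying computation is identical.
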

Note that, analogously to $\bU = \rchol( \widehat\bC^{-1})$ in Proposition \ref{prop:matrixrep}, we compute $\bV = \rchol(\bW)$ as the reverse Cholesky factor of $\bW$. This allows us to derive the sparsity structure of $\bV$ in Proposition \ref{prop:sparsity} below, and ensures low computational complexity for certain configurations of general Vecchia.

Thus, for these configurations, the likelihood $\adens(\bz_o)$ (integrated over $\by$) can be evaluated quickly for any given value of the parameters $\bftheta$ and $\tau^2$, which enables likelihood-based parameter inference even for very large datasets. Our framework is agnostic with respect to the inferential paradigm, allowing both frequentist and Bayesian inference.
Frequentist inference can be carried out by finding the parameter values that maximize $\adens(\bz_o)$. Appendix \ref{sec:clcomp}
details a simulation in which we compared SGV to the exact likelihood and two composite likelihood methods on estimating a spatial range
parameter. We found that SGV gave very similar parameter estimates to the exact maximum likelihood estimates. 
Appendix \ref{sec:bayesian} demonstrates how the SGV likelihood can be used to conduct Bayesian inference. This example considered both numerically integrated posteriors and a Metropolis-Hastings algorithm, based on which we obtained the posterior predictive distribution of $y(\cdot)$ at unobserved locations.

No matter the inferential paradigm, our models can be viewed as approximations of GP models, or as valid probability models in their own right (see \eqref{eq:densproduct}). All our inference is exact from the latter perspective. The error due to the Vecchia approximation itself disappears for large $m=n-1$, and it is examined numerically for smaller $m$ in Section \ref{sec:numerical}.


\subsection{Prediction \label{sec:prediction}}

For prediction, we can compute the posterior distribution of the error-free process vector given by $\by | \bz \sim \normal(\bfmu,\bW^{-1})$, where $\bfmu \colonequals -\bW^{-1} \bU_y\tilde\bz$. However, this requires consideration of complex issues, such as how to guarantee fast computation of relevant summaries of this distribution, and what ordering and conditioning strategies work well in the context of prediction at observed and unobserved locations. Thus, we refer to \citet{Katzfuss2018} for details on how to extend the general Vecchia framework to GP prediction.

\subsection{Sparsity structures \label{sec:sparsity}}

In the following proposition, we use connections between Vecchia approaches and DAGs (see Section \ref{sec:dags} and Appendix \ref{app:dags}) to verify the sparsity structure of $\bU$ in \eqref{eq:U} and to determine the sparsity of $\bW$ and $\bV = \rchol(\bW)$, which must be computed for inference.
\begin{prop}
\label{prop:sparsity} $ $
\begin{enumerate}[itemsep=1pt,topsep=1.5pt,parsep=1pt]
\item $\bU_{ji} = \bfzero$ if $i \neq j$ and $\bx_j \not\to \bx_i$.
\item $\bW_{ji} = \bfzero$ if $i\neq j$, $\by_j \not\to \by_i$, $\by_i \not\to \by_j$, and there is no $k> \max(i,j)$ such that both $\by_i \to \by_k$ and $\by_j \to \by_k$.
\item $\bV_{ji} = \bfzero$ if $j>i$. For $j<i$, $\bV_{ji}=\bfzero$ if there is no path between $\by_i$ and $\by_j$ on the subgraph $\{\by_i,\by_j\} \cup \{\by_k: k>i, \by_k \textnormal{ has at least one observed descendant}\}$.
\end{enumerate}
\end{prop}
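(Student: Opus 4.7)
The plan is to prove the three parts in order, with Parts 1 and 2 being short consequences of the definitions and Part 3 carrying essentially all the substance. For Part 1 I would simply read the claim off \eqref{eq:U}: off the diagonal, $\bU_{ji}$ is nonzero only when $j\in g(i)$, which by the construction of the DAG in Section~\ref{sec:dags} is equivalent to $\bx_j\to\bx_i$. For Part 2, I would expand $\bW_{ji}=\sum_r \bU_{\#(\by_j,\bx),r}\,\bU_{\#(\by_i,\bx),r}'$ and keep only those columns $r$ for which both factors are nonzero. Part 1 then forces $\bx_r$ to coincide with either $\by_i$ or $\by_j$, or to be a common child of both in the DAG. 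The case $\bx_r=\bz_r$ is excluded because each $\bz_r$ has the unique parent $\by_r$ (which would force $i=j$), and the case $\bx_r=\by_k$ requires $k>\max(i,j)$ by the Vecchia ordering constraint, yielding the three listed conditions.

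For Part 3, upper triangularity is immediate from the definition of $\rchol$. I would split the off-diagonal claim into two steps: an appeal to the classical Cholesky fill-in theorem, adapted to the reverse ordering used in $\rchol$, followed by a DAG reduction step. The fill-in theorem yields that, for $j<i$, a necessary condition for $\bV_{ji}\neq\bfzero$ is the existence of a path from $\by_i$ to $\by_j$ in the undirected graph of $\bW$ (whose edges are exactly those described in Part 2) with all intermediate vertices indexed above $i$. To cut this graph down to the subgraph in the statement, I would argue that every $\by_k$ with no observed descendant can be marginalized out of $\adens(\bx)$ in reverse topological order without altering $\adens(\bz_o)$ and without inducing any new dependencies among the remaining latent variables: at each step the $\by_k$ being eliminated is a sink of the current DAG, and its conditional density given its parents integrates to one, so no moralization edge is created. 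The pruning produces an equivalent general Vecchia approximation whose $\bW$ is supported on $\by_k$'s with observed descendants, and then the fill-in theorem applied to the reduced $\bW$ yields the advertised path condition.

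The main obstacle I anticipate is the bookkeeping for the reduction step. In small examples one can verify an explicit algebraic cancellation in the Cholesky computation --- for instance, when $\by_k$ is a common child of $\by_i$ and $\by_j$ in the DAG but has no observed descendant, the moralizing contribution to $\bW_{ij}$ coming from $\by_k$ is exactly offset by the first Schur step of the reverse Cholesky --- but in general one has to show that running $\rchol$ on the full $\bW$ and running $\rchol$ on the $\bW$ obtained after deleting $V^{\star}=\{\by_k:\by_k \text{ has no observed descendant}\}$ agree on the surviving rows and columns. I would settle this by induction on $|V^{\star}|$, processing sinks of the current DAG one at a time and identifying each Schur step of $\rchol$ with analytic integration against the corresponding factor in \eqref{eq:vecchia2}; once this equivalence is established, the standard fill-in theorem applied to the reduced $\bW$ completes the proof.
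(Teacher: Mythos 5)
Your Parts 1 and 2 are correct and take a more elementary route than the paper: the paper derives all three parts uniformly from the conditional-independence characterization of zeros in precision matrices and their (reverse) Cholesky factors together with $d$-separation in the DAG, whereas you read Part 1 directly off \eqref{eq:U} and obtain Part 2 by expanding $\bW=\bU_Y\bU_Y'$ entrywise and invoking Part 1. Both of those arguments are sound, and for those two parts your route is arguably cleaner.

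The gap is in Part 3, in the reduction step. You are right that the fill-path theorem alone is insufficient (the graph of $\bW$ contains co-parent edges created by common children without observed descendants, so the advertised subgraph condition requires an exact-cancellation argument), but the lemma you rest the pruning on --- that $\rchol(\bW)$ and $\rchol$ of the pruned matrix agree on the surviving rows and columns --- does not follow from ``processing sinks one at a time,'' because the vertices of $V^\star$ are interleaved with the surviving ones in the index order that $\rchol$ actually uses. The reverse Cholesky eliminates $\by_\ell,\by_{\ell-1},\ldots$ in turn; the column of $\bV$ for a surviving index $i$ is determined by the Schur complement $S$ of $\bW$ after eliminating only $\{i+1,\ldots,\ell\}$, whereas the corresponding column of the pruned factorization is determined by the Schur complement after additionally eliminating $B=V^\star\cap\{1,\ldots,i-1\}$. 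The two differ by $S_{jB}S_{BB}^{-1}S_{Bi}$, and showing that this term vanishes (equivalently, that $S_{Bi}=\bfzero$ because every moral edge between a $V^\star$ vertex below $i$ and $\by_i$ runs through a common child above $i$ whose contribution has already been cancelled) is itself a $d$-separation argument of essentially the same difficulty as the paper's direct proof; your identification of ``each Schur step of $\rchol$'' with an integration applies only when the sink being integrated out is the next variable in the elimination order, which is precisely what fails here. A second, smaller coverage issue: the proposition also asserts $\bV_{ji}=\bfzero$ for entries whose own indices $i$ or $j$ correspond to vertices without observed descendants (these remain in the stated subgraph as endpoints), and deleting those vertices from the model leaves such entries of $\bV$ unaddressed. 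The paper sidesteps all of this by characterizing $\bV_{ji}=\bfzero$ as conditional independence of $\by_i$ and $\by_j$ given $\mathcal{C}_V=\{\by_{h(i)\setminus j},\bz_o\}$ and applying $d$-separation, where the collider condition (B2) delivers the ``observed descendant'' requirement directly, with no cancellation lemma needed.
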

Thus, $\bU$ and $\bV$ are upper triangular, and the sparsity of the upper triangle depends on the Vecchia specification. For $j<i$, the $(j,i)$ block of $\bW$ is not only nonzero if $j \in q_y(i)$, but also if $\by_i$ and $\by_j$ appear in a conditioning vector together (i.e., $i,j \in q_y(k)$ for some $k$). Note that this sparsity structure corresponds to the adjacent vertices in the so-called moral graph \citep[e.g.,][Sect.~2.1.1]{Lauritzen1996}. The matrix $\bV$ is typically at least as dense as $\bW$ (assuming that all or most $\by_k$ have observed descendants), in that it has the same nonzeros as $\bW$ plus additional ones induced by more complicated paths. Figure \ref{fig:toyillustration} shows examples of DAGs with the corresponding sparsity structures of $\bU$ and $\bV$.

\subsection{Computational complexity \label{sec:complexity}}

Recall that $\bx$ consists of $\by=(\by_1,\ldots,\by_\ell)$ and $\bz_o$, where $\bz=(\bz_1,\ldots,\bz_\ell)$, and $\by_i$ and $\bz_i$ are of length $r_i$. Let $n$ be the total number of individual variables in $\bx$.
To simplify the sparsity and computational complexity calculations, assume that $o = (1,\ldots,\ell)$, all $r_i$ are of the same order $r$ (and so $n \approx br = 2\ell r$), and that all conditioning vectors consist of at most $m$ subsets (of size $r$): $|g(i)| \leq m$.

Then, it is easy to see from \eqref{eq:U} that $\bU$ has $\order(b\cdot(mr^2)) = \order(nmr)$ nonzero elements and can be computed in $\order(b\cdot(m^3r^3)) = \order(nm^3r^2)$ time.
Note that this time complexity is lower in $r$ than in $m$, whose product $mr$ makes up the total length of the conditioning vectors.

To evaluate the likelihood in \eqref{eq:likelihood}, we also need to compute $\bW = \bU_Y\bU_Y'$ and find its reverse Cholesky factor $\bV = \rchol(\bW)$.
Each conditioning vector is of size $|g(i)| \leq m$ and so contains at most $m(m-1)/2$ pairs of elements. Therefore, from Proposition \ref{prop:sparsity}, $\bW$ has at most $\order(\ell m^2)$ non-zero blocks. Since each block is of size $r \times r$ and $\ell = \order(n/r)$, $\bW$ has at most $\order(nrm^2)$ non-zero elements.

The time complexity for obtaining a lower-triangular Cholesky factor is on the order of the sum of the squares of the number of nonzero elements per column in the factor \citep[e.g.,][Thm.~2.2]{Toledo2007}. It can be easily verified that the same holds for our reverse Cholesky factor $\bV$. Thus, for any particular Vecchia approximation, the time complexity for inference can in principle be determined based on the corresponding DAG using Proposition \ref{prop:sparsity}. We give examples in the next section.


\section{Sparse general Vecchia (SGV) approximation \label{sec:sgv}}

We now study choice C5 from Section \ref{sec:generalvecchia} for fixed choices C1--C4; that is, we assume that the grouping, ordering, and the conditioning index vectors $q(1),\ldots,q(\ell)$ are fixed. We consider three methods that differ in their choice of latent versus observed conditioning (C5): the existing methods standard Vecchia (Section \ref{sec:vecchia}) and latent Vecchia (used in the NNGP in Section \ref{sec:nngp}), and a novel sparse general Vecchia (SGV) approach:
\begin{description}[itemsep=0.5pt]
\item[Standard Vecchia ($\adens_s$):] $q_z(i) = q(i)$, condition only on observed vectors $\bz_j$.
\item[Latent Vecchia ($\adens_l$):] $q_y(i) = q(i)$, condition only on latent vectors $\by_j$.
\item[SGV ($\adens_g$):] For each $i$, partition $q(i)$ into $q_y(i)$ and $q_z(i)$ such that $j$ and $k$ with $j<k$ can only both be in $q_y(i)$ if $j \in q_y(k)$.
\end{description}
In the terminology of \citet[][Sect.~2.1.1]{Lauritzen1996}, SGV ensures that the corresponding DAG forms a perfect graph. Different versions of SGV are possible for the same conditioning index vectors (and, in fact, standard Vecchia is one special case of SGV). Throughout this article, 
we consider the following strategy that attempts to maximize latent conditioning in the SGV: We obtain the latent-conditioning index vector $q_y(i)$ for each $i=2,\ldots,\ell$ by first finding the index $k_i \in q(i)$ whose latent-conditioning index vector has the most overlap with $q(i)$: $k_i=\arg\max_{j \in q(i)} |q_y(j) \cap q(i)|$. In case of a tie, we choose the $k_i$ for which the spatial distance between $\locs_i$ and $\locs_{k_i}$ is shortest. Then, we set $q_y(i) = (k_i) \cup (q_y(k_i) \cap q(i))$, with the remaining indices in $q(i)$ corresponding to observed conditioning: $q_z(i) = q(i) \setminus q_y(i)$.

The three approaches are illustrated in a toy example with $\ell=7$ shown in Figure \ref{fig:toyillustration}. For all three methods we have the same $q(1),\ldots,q(\ell)$: $q(2) = (1)$, $q(3) = ( 1,2 )$, $q(4)=(1,3)$, $q(5)=(2,4)$, \ldots. Like latent Vecchia, SGV uses $q_y(2) = (1)$, $q_y(3) = ( 1,2 )$, $q_y(4)=(1,3)$, as, for example, $1 \in q_y(3)$, and so $q_y(4)$ can contain both $1$ and $3$. However, $2 \notin q_y(4)$, and so SGV does not allow both $2 \in q_y(5)$ and $4 \in q_y(5)$, and sets $q_y(5)=(4)$ and $q_z(5)=(2)$.

We now establish an ordering on the accuracy of the approximations to $\dens(\bx)$.
\begin{prop}
\label{prop:KLordering}
The following ordering of Kullback-Leibler (KL) divergences holds:
\[\textnormal{KL}\big(\dens(\bx) \| \adens_l(\bx)\big) \leq \textnormal{KL}\big(\dens(\bx) \| \adens_g(\bx)\big) \leq \textnormal{KL}\big(\dens(\bx) \| \adens_s(\bx)\big).\]
\end{prop}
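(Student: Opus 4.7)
The plan is to reduce the three joint KL divergences to per-block comparisons via the chain rule for KL divergence, and then to establish the per-block ordering using monotonicity of conditional entropy together with the conditional independence of the noise. Since $\adens_l, \adens_g, \adens_s$ share the same ordering and grouping of $\bx$ and all factorize into \emph{exact} conditionals of $\dens$ (as in \eqref{eq:densproduct}), one obtains
\[\textnormal{KL}(\dens(\bx) \,\|\, \adens(\bx)) = \sum_i \E_\dens\!\left[\textnormal{KL}(\dens(\bx_i|\bx_{1:i-1}) \,\|\, \dens(\bx_i|A_i))\right],\]
where $A_i$ denotes the conditioning set chosen for block $\bx_i$ in the given method. (The fact that $\adens(\bx_i|\bx_{1:i-1}) = \dens(\bx_i|A_i)$ under the approximate joint follows by iteratively integrating out $\bx_{i+1:b}$ from the product \eqref{eq:densproduct}.) For each $\bz_i$-block, all three methods condition on $\by_i$, and $\dens(\bz_i|\by_i) = \dens(\bz_i|\bx_{1:i-1})$ by the independence of $\beps_i$, so these terms vanish from every KL. Only the $\by_i$-blocks contribute, and the problem reduces to comparing, for each $i$,
\[T_i(A) \;:=\; \E_\dens\!\left[\textnormal{KL}(\dens(\by_i|\by_{1:i-1}) \,\|\, \dens(\by_i|A))\right],\]
evaluated at $A_l = \by_{q(i)}$, $A_g = \by_{q_y(i)} \cup \bz_{q_z(i)}$, and $A_s = \bz_{q(i)}$.

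I would next use the identity $T_i(A) = I(\by_i; \bx_{1:i-1}|A)$, which holds for any $A \subseteq \bx_{1:i-1}$ (after noting $\dens(\by_i|\by_{1:i-1}) = \dens(\by_i|\bx_{1:i-1})$). The chain rule for conditional mutual information then gives the monotonicity $T_i(A) \geq T_i(B)$ whenever $A \subseteq B \subseteq \bx_{1:i-1}$. In parallel, the conditional independence $\by_i \perp \bz_j \,|\, \by_j$ for $j \neq i$ implies that adjoining a $\bz_j$ to any conditioning set already containing $\by_j$ (or removing such a redundant $\bz_j$) does not change $\dens(\by_i|\cdot)$, and hence leaves $T_i$ unchanged.

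Combining these two tools yields the two required inequalities. For the lower bound, redundancy gives $T_i(A_l) = T_i(A_l \cup \bz_{q(i)})$, and since $A_g \subseteq A_l \cup \bz_{q(i)}$, monotonicity then gives $T_i(A_l) \leq T_i(A_g)$. For the upper bound, redundancy gives $T_i(A_g) = T_i(A_g \cup \bz_{q_y(i)})$, and since $A_s \subseteq A_g \cup \bz_{q_y(i)}$, monotonicity gives $T_i(A_g) \leq T_i(A_s)$. Summing over $i$ yields the stated ordering of the joint KL divergences.

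I do not foresee any serious obstacle: once the $\bz_i$-contributions are identified as zero, the remainder is a two-line application of standard information-theoretic facts. The only care required is the bookkeeping around the redundancy step --- making sure $A_l$ and $A_g$ are enlarged in a way that respects the chain-rule hypothesis $A \subseteq B \subseteq \bx_{1:i-1}$ while preserving the value of $T_i$. In particular, the SGV constraint on $q_y(i)$ that ensures a perfect moral graph and the sparsity of $\bV$ is not needed for the KL ordering; the argument holds for any partition of $q(i)$ into $q_y(i)$ and $q_z(i)$.
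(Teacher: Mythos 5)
Your proposal is correct and follows essentially the same route as the paper: the key steps in both are the redundancy observation that $\dens(\by_i|\cdot)$ is unchanged by adjoining $\bz_j$ to a conditioning set already containing $\by_j$, the resulting nesting $\bz_{q(i)} \subseteq (\by_{q_y(i)},\bz_{q(i)}) \subseteq (\by_{q(i)},\bz_{q(i)})$, and monotonicity of the KL divergence in the conditioning set. The only difference is that the paper simply cites Theorem~1 of \citet{Guinness2016a} for the monotonicity, whereas you re-derive it via the chain-rule decomposition and conditional-entropy argument, which makes your version self-contained; your closing remark that the SGV perfect-graph constraint is irrelevant to the KL ordering is also consistent with the paper.
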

Thus, the approximation accuracy for the joint distribution of $\bx$ is better for latent Vecchia than for SGV, which is better than that for standard Vecchia. Note, however, that this does not guarantee that the KL divergence for the implied distribution of the observations $\bz_o$ follows the same ordering.
For example, Proposition \ref{prop:KLordering} says that $\E^\dens(\log \adens_g(\bx)) \geq \E^\dens(\log \adens_s(\bx))$, but that does not guarantee that $\E^\dens(\log \int \adens_g(\bx) d\by) \geq \E^\dens(\log \int \adens_s(\bx) d\by)$. Examples of this can be found in Figure \ref{subfig:kl2dsnr1exp}.

Another important factor is the computational complexity of the different approaches. Standard Vecchia only conditions on observed quantities, and so for any $i,j$ we have $\by_j \not\to \by_i$, resulting in a diagonal $\bW$ and $\bV$ according to Proposition \ref{prop:sparsity}, and hence an overall time complexity of $\order(nm^3r^2)$ for standard Vecchia.

\citet{Finley2017} observed numerically that matrices in the NNGP (which uses the latent Vecchia approach) were less sparse than in standard Vecchia. We can examine this issue further using Proposition \ref{prop:sparsity}. In the toy example in Figure \ref{fig:toyillustration}, latent Vecchia uses $q_y(5)=(2,4)$, which creates the path $(\by_2,\by_5,\by_4)$ that leads to $\bV_{2,4} \neq \bfzero$. Setting $q_y(6) = (3,5)$ creates the path $(\by_3,\by_6,\by_7,\by_5)$, leading to $\bV_{3,5}\neq \bfzero$. This results in $3>m=2$ nonzero off-diagonal elements in columns 4 and 5. We provide more insight into the increased computational cost for latent Vecchia in the following proposition:
\begin{prop}
\label{prop:latentcomplexity}
Consider the latent Vecchia approach with $r_i = 1$, $o=(1,\ldots,\ell)$, $m \leq n_z^{1/d}$, and coordinate-wise ordering for locations on an equidistant grid in a $d$-dimensional hypercube with nearest-neighbor conditioning. Then, $\bV = \rchol(\bW)$ has $\order(n^{1-1/d}m^{1/d})$ nonzero elements per column, requiring $\order(n^{2-1/d}m^{1/d})$ memory. The resulting time complexity for computing $\bV =\rchol(\bW)$ is $\order(n^{3-2/d} m^{2/d})$.
\end{prop}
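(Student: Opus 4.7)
The plan is to combine Proposition~\ref{prop:sparsity}(3) with a geometric ``slab'' argument to bound the nonzeros per column of $\bV$, then apply the sparse-Cholesky cost formula cited in Section~\ref{sec:complexity}. Since $o=(1,\ldots,\ell)$, every $\by_k$ has $\bz_k$ as an observed descendant, so by Proposition~\ref{prop:sparsity}(3) we have $\bV_{ji}\neq \bfzero$ (for $j<i$) only if there is a path $\by_j=\by_{k_0},\by_{k_1},\ldots,\by_{k_t}=\by_i$ in the moral graph whose every intermediate vertex has index $k_s>i$. The moral graph has two kinds of edges: DAG edges (between $\by_a,\by_b$ with $a\in q_y(b)$) and co-parent edges (between $\by_a,\by_b$ sharing a common child $\by_c$, $c>\max(a,b)$). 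With nearest-neighbor conditioning on an equidistant grid in a $d$-dimensional hypercube, any interior point $b$ has $q_y(b)$ contained in a Euclidean ball of radius $\delta_b=\Theta((m/n_z)^{1/d})$ centered at $\bs_b$, since such a ball contains $\Theta(m)$ grid points; hence every moral-graph edge has length at most $C\delta$ with $\delta\colonequals (m/n_z)^{1/d}$ and $C$ a constant depending only on $d$. Boundary points with fewer than $m$ earlier neighbors contribute only $\order(m^2)$ total nonzeros and do not affect the stated orders.

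The central step, and the main obstacle, is turning this edge-length bound into the slab estimate. Let $[\bs_i]_1$ denote the dominant sorting coordinate of $\bs_i$, so that $k>i$ implies $[\bs_k]_1\geq [\bs_i]_1$. If the path has length one, the direct moral-graph edge $(\by_j,\by_i)$ forces $|[\bs_j]_1-[\bs_i]_1|\leq \|\bs_j-\bs_i\|\leq C\delta$. If the path has length at least two, I inspect only the first edge $(\by_j,\by_{k_1})$: since $\|\bs_j-\bs_{k_1}\|\leq C\delta$ and $[\bs_{k_1}]_1\geq [\bs_i]_1$, one obtains $[\bs_i]_1-[\bs_j]_1\leq [\bs_{k_1}]_1-[\bs_j]_1\leq C\delta$. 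Thus in either case $\bs_j$ lies in the slab $\{\bs\in\domain:[\bs_i]_1-C\delta\leq [\bs]_1\leq [\bs_i]_1\}$. The assumption $m\leq n_z^{1/d}$ ensures $C\delta=Cm^{1/d}/n_z^{1/d}\ll 1$, so the slab fits inside the hypercube and contains at most $C\delta \cdot n_z^{1/d}\cdot n_z^{(d-1)/d}=\order(\delta n_z)=\order(n^{1-1/d}m^{1/d})$ grid points, which bounds the nonzeros per column of $\bV$.

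The remaining steps are routine aggregations. Summing the per-column bound over the $\ell=\Theta(n)$ columns gives total memory $\order(n^{2-1/d}m^{1/d})$. By the cited result of \citet[][Thm.~2.2]{Toledo2007}, the arithmetic cost of computing $\bV=\rchol(\bW)$ is proportional to the sum of the squares of the per-column nonzero counts, yielding $\Theta(n)\cdot \order(n^{1-1/d}m^{1/d})^2=\order(n^{3-2/d}m^{2/d})$. The crux is the first-edge observation: even though a valid path may wander arbitrarily through vertices with $[\bs_{k_s}]_1\geq [\bs_i]_1$, its initial edge alone already anchors $[\bs_j]_1$ within $C\delta$ of $[\bs_i]_1$, so the slab bound does not degrade with path length.
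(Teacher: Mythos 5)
Your slab argument is sound, and the first--edge observation (the initial DAG edge of any admissible path already pins $[\bs_j]_1$ to within $O\big((m/n_z)^{1/d}\big)$ of $[\bs_i]_1$, no matter how the rest of the path wanders through $\{\by_k : k>i\}$) is exactly the right geometric reason the stated order is correct; the edge effects you wave away are also ignored in the paper. However, you have proved the \emph{converse} inequality to the one the paper proves, and the one the proposition is actually there to deliver. Your argument shows that no $j$ outside a slab of width $O(m^{1/d})$ grid units can produce a nonzero $\bV_{ji}$, i.e., an \emph{upper} bound on the fill-in and hence on the cost. The paper's proof instead \emph{constructs} paths: for every $\bs_j=(s_{i1}-t,s_{j2},\ldots,s_{jd})$ with $1\leq t\leq t_m=\order(m^{1/d})$, it exhibits $\by_j\to\by_a$ with $\bs_a=(s_{i1}+1,s_{j2},\ldots,s_{jd})$, walks along the connected hyperplane $s_1=s_{i1}+1$ (all of whose vertices have index $>i$) to $\by_p$ with $\by_i\to\by_p$, and concludes via Proposition \ref{prop:sparsity} that all $\order(n^{1-1/d}m^{1/d})$ of these $\bV_{ji}$ are nonzero --- a \emph{lower} bound. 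That lower bound is the operative content here: the proposition's role is to show that latent Vecchia's cost genuinely degrades toward cubic as $d$ grows (in contrast to SGV's linear scaling in Proposition \ref{prop:complexity}), and an upper bound alone cannot show that the fill-in is actually attained. Read literally as big-$O$ statements your proof covers the claims, and together the two arguments give the sharp $\Theta$ result, but to reproduce what the paper establishes you would need to add the existence half of the argument.
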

Thus, the time complexity for obtaining $\bV$ is $\order(nm^2)$ in $d=1$ dimensions, $\order(n^2m)$ for $d=2$, and approaching the cubic complexity in $n$ of the original GP as $d$ increases. For irregular observation locations, we expect roughly similar scaling if the locations can be considered to have been drawn from independent uniform distributions over the domain. Also note that using reordering algorithms for the Cholesky decomposition (as opposed to simple reverse ordering) could lead to different complexities, although our numerical results indicate that this might actually increase the computational complexity (see Figure \ref{fig:complexity}).

In contrast to latent Vecchia, SGV results in guaranteed sparsity. In the toy example, SGV sets $q_y(5)=(4)$ and $q_z(5)=(2)$ because $2 \notin q_y(4)$, and $q_y(6) = (5)$ and $q_z(6) =(3)$ because $3 \notin q_y(5)$, resulting in $\bV_{2,4} = \bV_{3,5} = \bfzero$ (in contrast to latent Vecchia). More generally, SGV preserves the linear scaling of standard Vecchia, in any spatial dimension and for gridded or irregularly spaced locations:
\begin{prop}
\label{prop:complexity}
For SGV, $\bV$ has at most $mr$ off-diagonal elements per column, and so the time complexity for computing $\bV = \rchol(\bW)$ is only $\order(nm^2r^2)$. Thus, SGV has the same overall computational complexity as standard Vecchia.
\end{prop}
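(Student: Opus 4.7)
The plan is to show that for SGV, the reverse Cholesky factor $\bV = \rchol(\bW)$ inherits the same block-sparsity pattern as the upper triangular part of $\bU_Y$, so that each block column of $\bV$ contains at most $|q_y(i)| \leq m$ off-diagonal nonzero blocks. Since each block has size $r \times r$, this yields at most $mr$ scalar off-diagonal nonzero elements per scalar column of $\bV$, and the time complexity then follows from the standard Cholesky formula summing squared column counts.

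First I would pin down the upper-triangular sparsity of $\bW = \bU_Y \bU_Y'$ using Proposition \ref{prop:sparsity}(2). For $j < i$, a nonzero $\bW_{ji}$ requires either $j \in q_y(i)$, or the existence of some $k > i$ with $i, j \in q_y(k)$. In the latter case, since $j < i < k$, the defining SGV property (with $j < i$ both in $q_y(k)$) immediately yields $j \in q_y(i)$. Hence the upper-triangular block pattern of $\bW$ coincides with that of $\bU_Y$, with at most $m$ nonzero off-diagonal blocks per block column.

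Next, and this is the main obstacle, I would show that no fill-in occurs under reverse Cholesky: for $j < i$, $\bV_{ji} \neq \bfzero$ implies $j \in q_y(i)$. Using Proposition \ref{prop:sparsity}(3), it suffices to show that any path $\by_i = \by_{v_0}, \by_{v_1}, \ldots, \by_{v_s} = \by_j$ whose intermediate vertices all have index exceeding $i$ forces $j \in q_y(i)$. I would argue by induction on $s$: the base case $s = 1$ is trivial since $j < i$ forces the single edge to be $\by_j \to \by_i$. For the inductive step, identify the vertex $\by_{v_t}$ of largest index among $\{\by_{v_1}, \ldots, \by_{v_{s-1}}\}$; its two path-neighbors have strictly smaller indices, so both incident edges must point into $\by_{v_t}$, placing both neighbors in $q_y(v_t)$. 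The SGV property then guarantees a direct edge between these two neighbors, allowing me to splice $\by_{v_t}$ out of the path and obtain a shorter path in the same subgraph to which the inductive hypothesis applies. A small amount of care is needed when the peak is adjacent to an endpoint, where one of the neighbors is $\by_i$ or $\by_j$ itself, but the same SGV shortcut applies and either terminates the induction directly (in the $s=2$ case, via $j,i \in q_y(v_1)$) or reduces the path length as before.

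Combining these two sparsity results gives at most $m$ off-diagonal nonzero blocks per block column of $\bV$, hence at most $mr$ scalar off-diagonal nonzero elements per scalar column. Applying the standard Cholesky cost formula \citep[Thm.~2.2]{Toledo2007} and summing over the $\order(n)$ columns yields the $\order(nm^2 r^2)$ cost for computing $\bV$. Since constructing $\bU$ costs $\order(nm^3 r^2)$ exactly as in standard Vecchia (Section \ref{sec:complexity}), and the extra $\bV$ cost is of lower order in $m$, the overall computational complexity for SGV matches that of standard Vecchia.
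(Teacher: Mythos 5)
Your proof is correct, and the final complexity accounting matches the paper's. The interesting difference lies in how the central no-fill-in claim ($\bV_{ji}=\bfzero$ unless $j \in q_y(i)$) is established. The paper fixes $j<i$ with $\by_j \not\to \by_i$ and runs a forward induction over the growing subgraph $\{\by_i,\by_j\}\cup\{\by_t : i < t \leq k\}$, maintaining the invariant that the descendant sets of $\by_i$ and of $\by_j$ within that subgraph stay disjoint: the SGV rule prevents any $q_y(k+1)$ from drawing on both sets, since that would force an edge between a descendant of $\by_i$ and one of $\by_j$. You instead induct on the length of a putative connecting path whose interior lies above $i$: the maximum-index interior vertex is a collider, so both of its path-neighbors lie in its $q_y$, the SGV rule supplies a chord, and the collider is spliced out. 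This is the classical perfect-elimination-ordering argument for chordal graphs; its advantage is that it converts ``a path exists'' directly into ``$\by_j \to \by_i$,'' whereas the paper's descendant-set invariant leaves the final step from ``disjoint descendant sets'' to ``no connecting path'' implicit. Your handling of the endpoint cases and of the restriction to vertices with observed descendants (the full subgraph having no path a fortiori implies none on the restricted one) is sound. Your opening observation that the SGV rule collapses the moral-graph fill of $\bW$ onto the sparsity pattern of $\bU_Y$ is a nice bonus not stated in the paper, though it is not actually needed for bounding the columns of $\bV$.
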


In summary, SGV provides improvements in approximation accuracy over standard Vecchia (Proposition \ref{prop:KLordering}) while retaining linear computational complexity in $n$ (Proposition \ref{prop:complexity}). Latent Vecchia results in improved approximation accuracy but can raise the computational complexity severely (Proposition \ref{prop:latentcomplexity}), which can be infeasible for large $n$.
Numerical illustrations of these results can be found in Section \ref{sec:numerical}.

\section{Ordering and conditioning \label{sec:ordcond}}

We now provide some insight into choices C2--C4 of Section \ref{sec:generalvecchia}. For simplicity, we henceforth assume $r_i = 1$ unless stated otherwise.

\subsection{Ordering (C3) \label{sec:ordering}}

In one spatial dimension, a ``left-to-right'' ordering of the locations in $\locs$ is natural. However, in two or more spatial dimensions, it is not obvious how the locations should be ordered. For Vecchia approaches, the default and most popular ordering is along one of the spatial coordinates (coord ordering). \citet{Datta2016} only observed a negligible effect of the ordering on the quality of the Vecchia approximation, but \citet{Guinness2016a} showed that this is not always the case. He proposed different ordering schemes, including an approximate maximum-minimum-distance (maxmin) ordering, which sequentially picks each location in the ordering by aiming to maximize the distance to the nearest of the previous locations. \citet{Guinness2016a} showed that maxmin ordering can lead to substantial improvements over coord ordering in settings without any nugget or noise. We will examine the nonzero nugget case in Section \ref{sec:numerical}. Note that the MRA (Section \ref{sec:mra}) implies an ordering scheme similar to maxmin, starting with a coarse grid over space and subsequently getting denser and denser.

\subsection{Choosing $m$ \label{sec:m}}

For a given ordering, as part of C4 we must choose $m$, the size of the conditioning vectors.

For one-dimensional spatial domains, some guidance can be obtained for approximating a GP with a Mat\'ern covariance on a one-dimensional domain. If the smoothness is $\nu = 0.5$, we have a Markov process of order 1, and so we can get an exact approximation for latent conditioning with $m=1$ by ordering from left to right. \citet{Stein2011} conjectures that for smoothness $\nu$, approximate screening holds for any $m>\nu$. This conjecture is explored numerically in Section \ref{sec:numerical}, specifically in Figure \ref{fig:1D_increasingSmoothness}. Note that coord ordering in 1-D with $m$-nearest-neighbor conditioning amounts to an AR($m$) model, and the corresponding latent or SGV inference is equivalent to a Kalman filter and smoother \citep[cf.][]{Eubank2002}. For very smooth processes (i.e., very large $\nu$), the $m$ necessary for (approximate) screening won't be affordable any more, and alternative ordering and conditioning strategies might be advantageous (see Section \ref{sec:conditioning} below).

For two or more dimensions, the necessary $m$ will depend not only on the smoothness of the covariance function, but also on the chosen ordering, the observation locations (regular or irregular), and other factors. We suggest starting with a relatively small $m$ and gradually increasing it using warm-starts based on previously obtained parameter estimates, until the estimates have converged to a desired tolerance, or until the available computational resources have been exhausted.

\subsection{Conditioning\label{sec:conditioning}}

For a given ordering and $m$, the most common strategy is to simply condition on the $m$ nearest neighbors or locations (NN conditioning), although more elaborate conditioning schemes have been proposed \citep[e.g.,][]{stein2004,Gramacy2015}. It can also be advantageous in some situations to place a coarse grid over space at the beginning of the ordering and to always condition on this grid.
Properties of this same-conditioning-set (SCS) approach are described in Appendix \ref{app:scs}.


\section{Numerical study \label{sec:numerical}}

We examined numerically the propositions and claims made in previous sections. We explored C3--C5 from Section \ref{sec:generalvecchia}, with an emphasis on C5 by comparing the three approaches from Section \ref{sec:sgv}. Throughout this section, we set $r_i=1$ and $o = (1,\ldots,\ell)$, so that each latent variable had a corresponding observed variable. The observation locations were equidistant grids on the unit interval or unit square, and the true GP was assumed to have Mat\'ern covariance with variance $\sigma^2$, smoothness $\nu$, and effective range $\lambda$ (i.e., the distance at which the correlation drops to 0.05). We added noise with variance $\tau^2$, set $\sigma^2 + \tau^2 = 1$, and so the signal proportion was $\sigma^2/(\sigma^2 + \tau^2) = \sigma^2$. For example, signal proportions of 1/2 and 2/3 correspond to signal-to-noise ratios (SNRs) of 1 and 2, respectively. We considered coordinate-wise (coord) ordering and the approximate maximum-minimum-distance (maxmin) ordering of \citet{Guinness2016a}. We used nearest-neighbor (NN) conditioning for a given ordering, unless stated otherwise. Comparisons among methods are made using the Kullback-Leibler (KL) divergence between the approximate distribution $\adens(\bz)$ and the true distribution $\dens(\bz)$.

First, we assumed a one-dimensional spatial domain, $\domain= [0,1]$, and only considered the natural coord ordering ``from left to right.'' The different methods from Section \ref{sec:sgv} then essentially correspond to latent or non-latent AR($m$) processes. From the results shown in Figure \ref{fig:1D} with $n_z = 100$ and $\lambda =0.9$, we can see that latent Vecchia and the equivalent SGV performed much better than standard Vecchia. Figure \ref{fig:1D_increasingSmoothness} also confirms numerically the conjecture from Section \ref{sec:m} that (approximate) screening holds for latent Vecchia if $m  > \nu$.

\begin{figure}
	\begin{subfigure}{.48\textwidth}
	\centering
	\includegraphics[width =.7\linewidth]{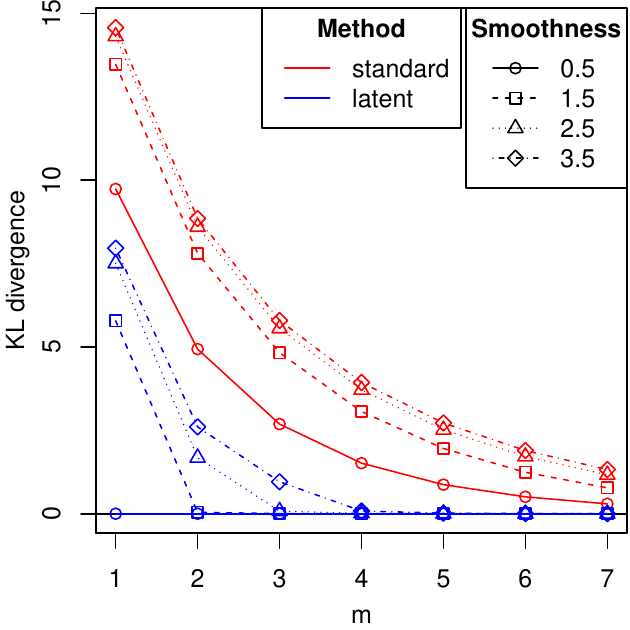}
	\caption{SNR $=1$}
	\label{fig:1D_increasingSmoothness}
	\end{subfigure}%
\hfill
	\begin{subfigure}{.48\textwidth}
	\centering
	\includegraphics[width =.7\linewidth]{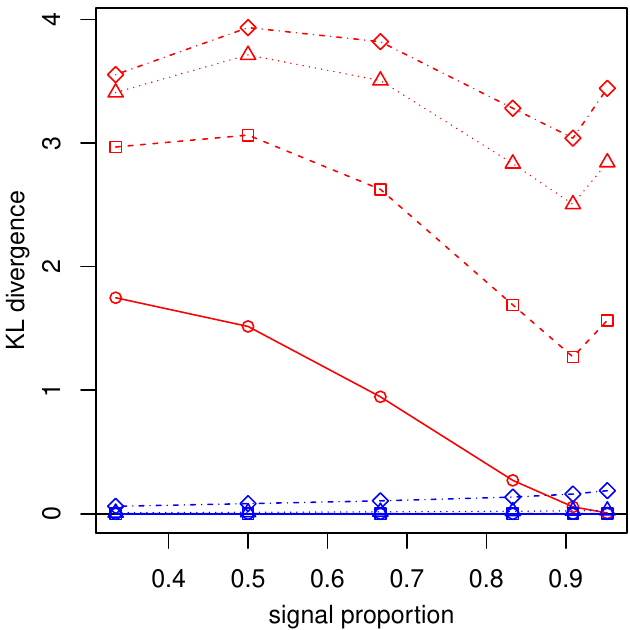}
	\caption{$m=4$}
	\end{subfigure}%
  \caption{KL divergences for Vecchia approximations of a GP with Mat\'ern covariance on the unit interval with coord ordering and NN conditioning. SGV is equivalent to latent Vecchia in this setting.}
\label{fig:1D}
\end{figure}

\begin{figure}[!htb]
\centering\includegraphics[width =.75\linewidth]{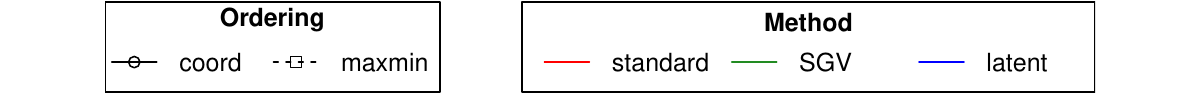}

\vspace{2mm}
	\begin{subfigure}{.33\textwidth}
	\centering
	\includegraphics[width =.97\linewidth]{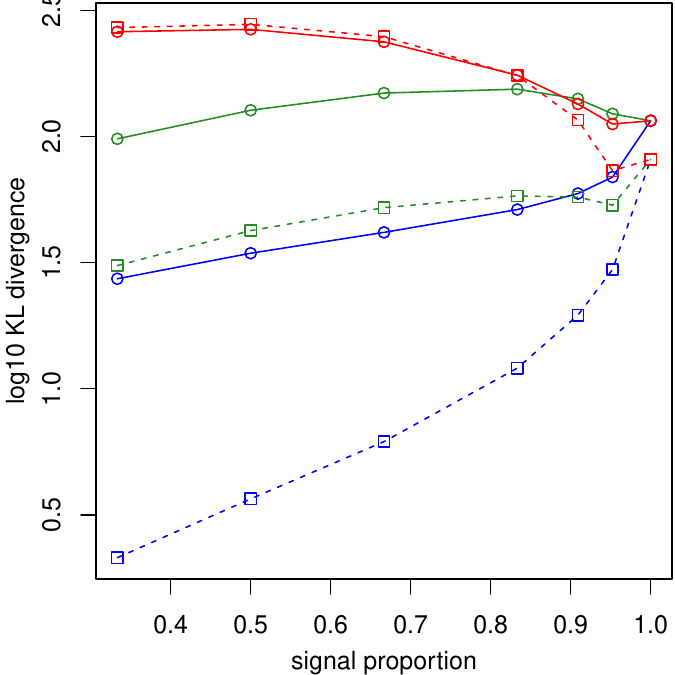}
	\caption{$m=5, \; \nu = 0.5$}
	\end{subfigure}%
\hfill
	\begin{subfigure}{.33\textwidth}
	\centering
	\includegraphics[width =.97\linewidth]{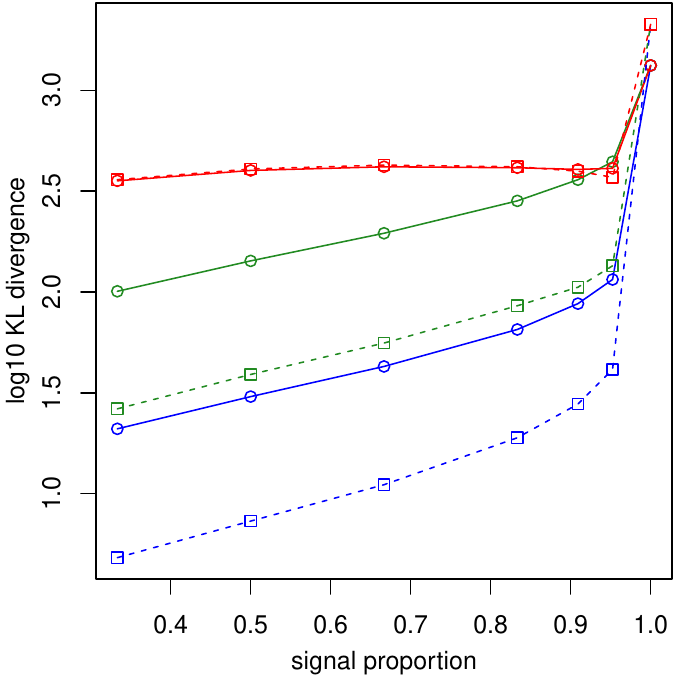}
	\caption{$m=5, \; \nu = 1.5$}
	\end{subfigure}%
\hfill
	\begin{subfigure}{.33\textwidth}
	\centering
	\includegraphics[width =.97\linewidth]{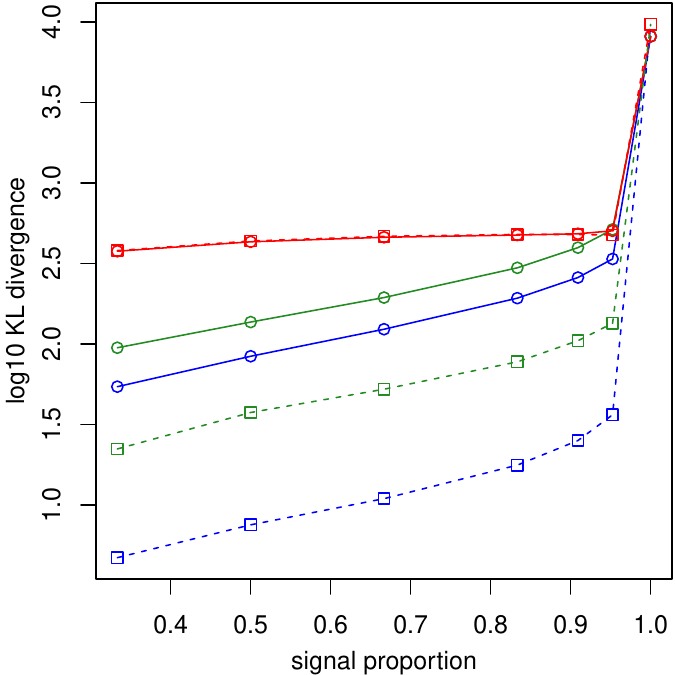}
	\caption{$m=5, \; \nu = 2.5$}
	\end{subfigure}%

\vspace{3mm}
	\begin{subfigure}{.33\textwidth}
	\centering
	\includegraphics[width =.97\linewidth]{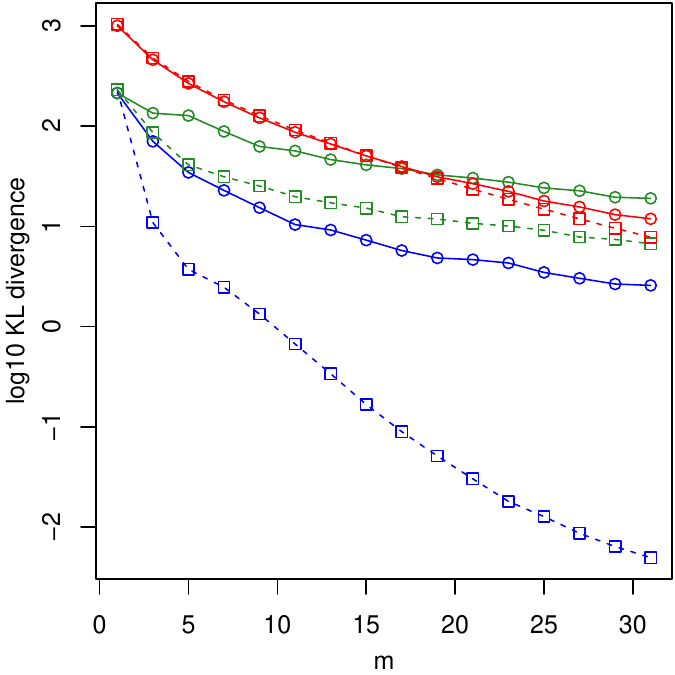}
	\caption{SNR $ =1, \; \nu = 0.5$}
	\label{subfig:kl2dsnr1exp}
	\end{subfigure}%
\hfill
	\begin{subfigure}{.33\textwidth}
	\centering
	\includegraphics[width =.97\linewidth]{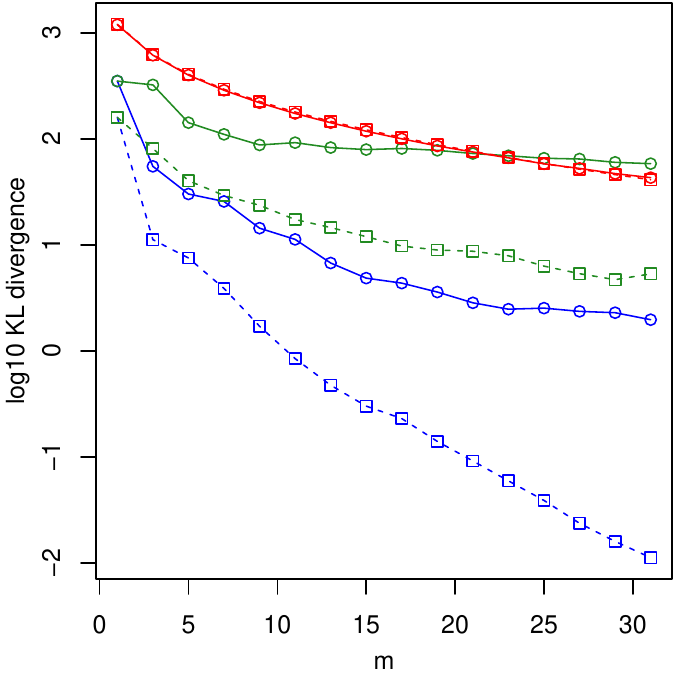}
	\caption{SNR $ =1, \; \nu = 1.5$}
	\label{subfig:kl2dsnr1_15}
	\end{subfigure}%
\hfill
	\begin{subfigure}{.33\textwidth}
	\centering
	\includegraphics[width =.97\linewidth]{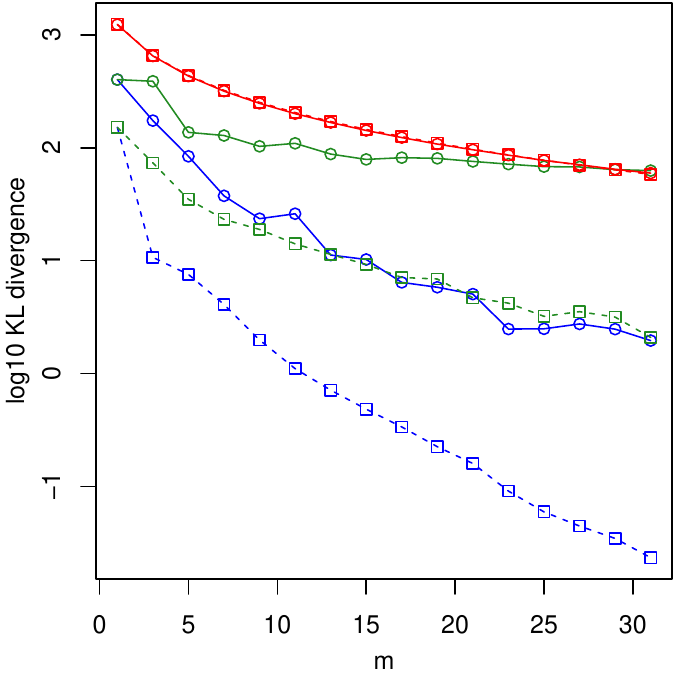}
	\caption{SNR $ =1, \; \nu = 2.5$}
	\end{subfigure}%
\caption{KL divergences (on a log scale) for a Mat\'ern covariance with smoothness $\nu$ on the unit square. Panels (a)--(c): fixed $m=5$, varying signal proportion, with symbols corresponding to (from left to right) SNRs of 0.5, 1, 2, 5, 10, 20, $\infty$, respectively. Panels (d)--(f): fixed SNR $=1$, varying $m$.}
\label{fig:numericalcomp}
\end{figure}

The remaining results are for a two-dimensional domain, $\domain = [0,1]^2$. Exploring Proposition \ref{prop:KLordering}, Figure \ref{fig:numericalcomp} shows KL divergences for different values of $\sigma^2$, $\nu$, and $m$, all for $n_z= $ 6,400 and $\lambda=0.9$. As we can see, the KL divergences for the three methods roughly followed the ordering from Section \ref{sec:sgv}, with latent Vecchia performing better than SGV, which performed better than standard Vecchia. (For SNR=$\infty$, the methods are equivalent.) The screening effect is less clear in two dimensions. Also note that maxmin ordering often resulted in tremendous improvements over coord ordering, except for standard Vecchia, where the two orderings produced similar results.

We also considered very smooth covariances, which are less common in geostatistics but very popular in machine learning. We explored conditioning on the same first $m$ variables in the maxmin ordering, which are spread throughout the domain. Figure \ref{fig:condcomp} shows that this can result in strong improvements over NN ordering for SGV.

\begin{figure}
	\begin{subfigure}{.48\textwidth}
	\centering
	\includegraphics[width =.7\linewidth]{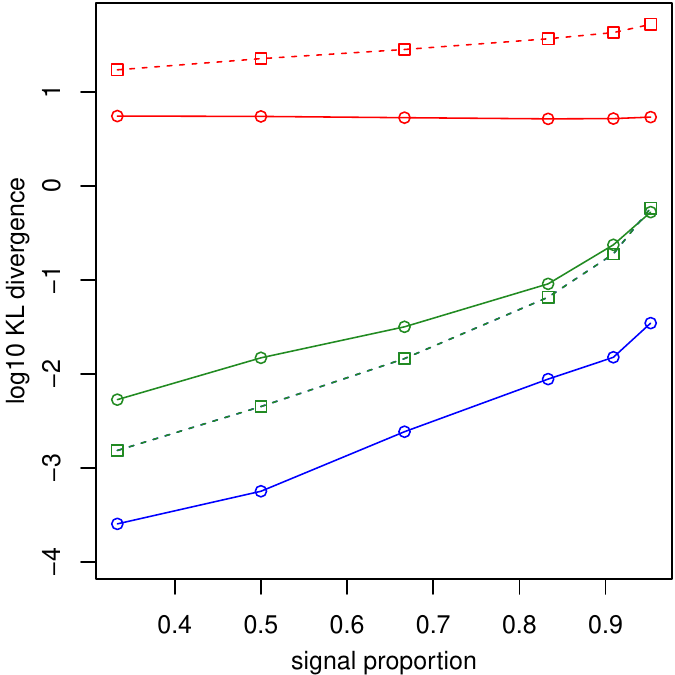}
	\caption{$\nu=3$}
	\end{subfigure}%
\hfill
	\begin{subfigure}{.48\textwidth}
	\centering
	\includegraphics[width =.7\linewidth]{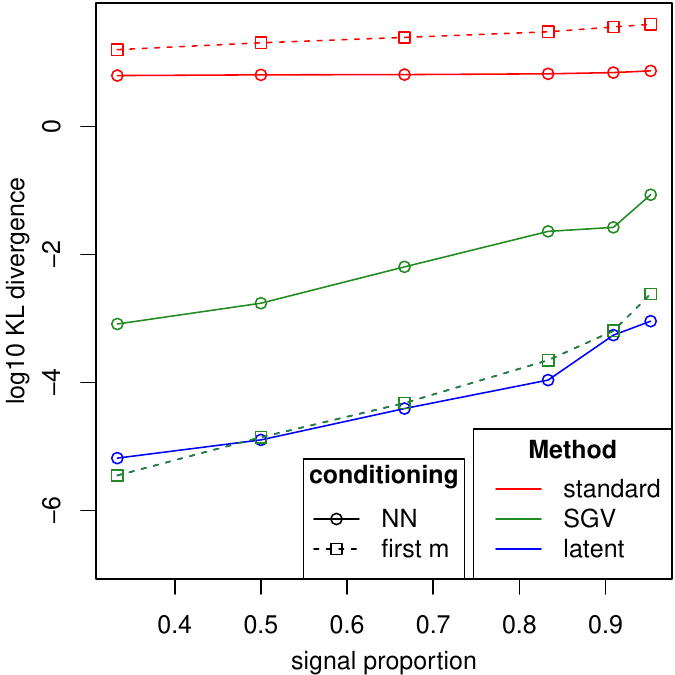}
	\caption{$\nu=10$}
	\end{subfigure}%
\caption{KL divergences (on a log scale) for smooth covariances, comparing nearest-neighbor (NN) conditioning versus always choosing the first $m$ variables; $n_z=400$, $m=16$, maxmin ordering, $\lambda \approx$ 2. For first-$m$ conditioning, SGV and latent are equivalent.}
\label{fig:condcomp}
\end{figure}

The computational feasibility of the methods is explored in Figure \ref{fig:nnzV}, which examines the sparsity of the matrix $\bV$. We can see that SGV keeps the number of nonzero elements per column in $\bV$ at or below $m$, as would be expected from Proposition \ref{prop:complexity}, resulting in linear scaling as a function of $n$. For latent Vecchia, $\bV = \rchol(\bW)$ is considerably denser, and the computational complexity for obtaining $\bV$ scales roughly as $\order(n^2)$, as expected from Proposition \ref{prop:latentcomplexity}. MMD ordering of $\bW$ did not improve the complexity for coord. Figure \ref{fig:Vtime} shows actual computation times for obtaining $\bV$ from $\bW$ using the \texttt{chol} function in the R package \texttt{spam} \citep{furrer2010} on a 4-core machine (Intel Core i7-3770) with 3.4GHz and 16GB RAM. Despite the \texttt{chol} function being more highly optimized for the default MMD ordering than the user-supplied reverse ordering, latent Vecchia with MMD ordering is roughly two orders of magnitude slower than SGV with reverse ordering for $n_z$ around 100{,}000. A further timing study in \citet{Katzfuss2018} shows that, for standard Vecchia and SGV, the time for computing $\bV$ is negligible relative to that for $\bU$; hence, for a given $n$ and $m$, standard Vecchia and SGV require almost the same computation time. In contrast, latent Vecchia can be orders of magnitude slower when $n$ is large.

\begin{figure}
	\begin{subfigure}{.33\textwidth}
	\centering
	\includegraphics[width =.97\linewidth]{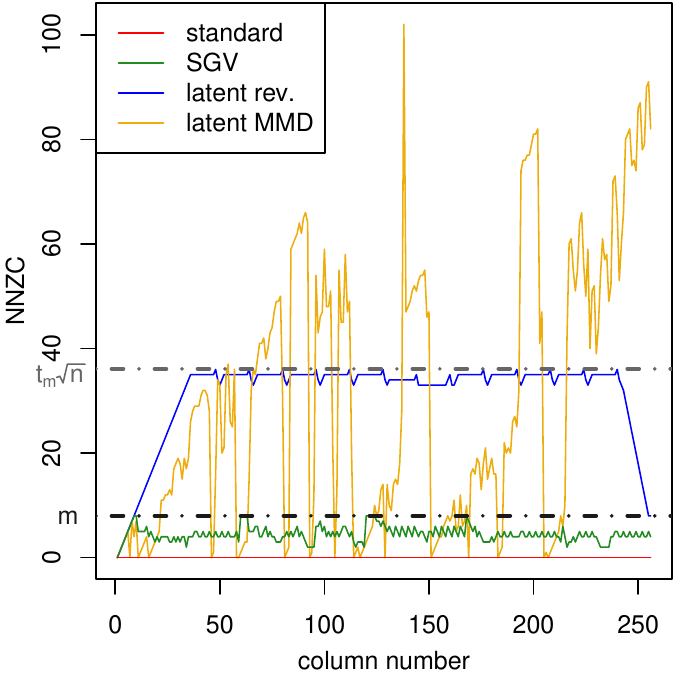}
	\caption{NNZC for $n_z=256$ and coord}
	\end{subfigure}%
\hfill
	\begin{subfigure}{.33\textwidth}
	\centering
	\includegraphics[width =.97\linewidth]{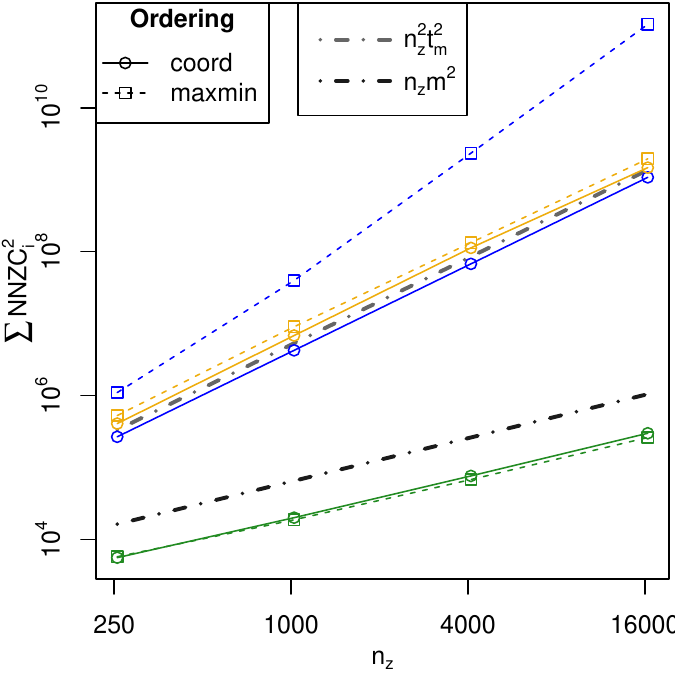}
	\caption{Sum of (NNZC)$^2$}
	\label{fig:complexity}
	\end{subfigure}%
\hfill
	\begin{subfigure}{.33\textwidth}
	\centering
	\includegraphics[width =.97\linewidth]{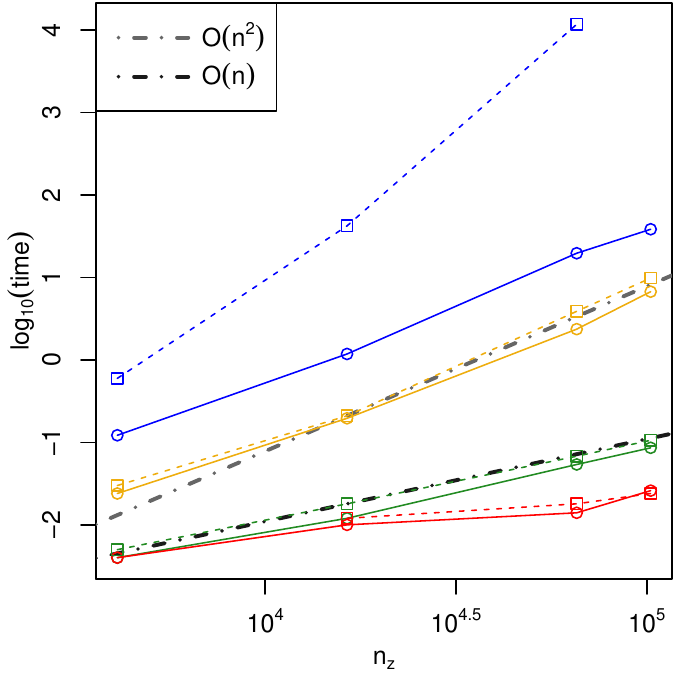}
	\caption{Time to compute $\bV$}
	\label{fig:Vtime}
	\end{subfigure}%
  \caption{Sparsity, complexity, and actual computation times of obtaining $\bV$ with $m=8$, and so $t_m = (2m/\pi)^{1/2} \approx 2.25$ (see proof of Proposition \ref{prop:latentcomplexity}). NNZC: number of nonzero off-diagonal elements per column in $\bV$; MMD and rev.: multiple minimum degree and reverse ordering, respectively, for Cholesky algorithm}
\label{fig:nnzV}
\end{figure}

Figure \ref{fig:mracomp} shows a comparison for large $n$ of four methods that all scale linearly, namely SGV, standard Vecchia, MRA (Section \ref{sec:mra}), and independent blocks (Section \ref{sec:indblocks}), using maxmin ordering where applicable. We set $\domain = [0,1]^2$, $\sigma=\tau=1$, and $\lambda=0.9$. 
In Panel \ref{fig:infill}, we explored the accuracy of the methods under infill asymptotics, by simulating data on a fine $280 \times 280$ grid, and then, starting with a coarse subset or subgrid of size $100 \times 100$, considering larger and larger subsets of the data. It is infeasible to compute the exact KL divergence for large $n_z$, and so we approximated it by subtracting each method's loglikelihood from the loglikelihood for SGV with large $m=40$, all averaged over 10 simulated datasets. While the time complexity for independent blocks and MRA (see Appendix \ref{app:scs}) is only $\order(m^{2/3})$ of that for the Vecchia approaches, SGV outperformed all other approaches even when adjusting for differences in complexity.

\begin{figure}
	\begin{subfigure}{.33\textwidth}
	\centering
	\includegraphics[width =.97\linewidth]{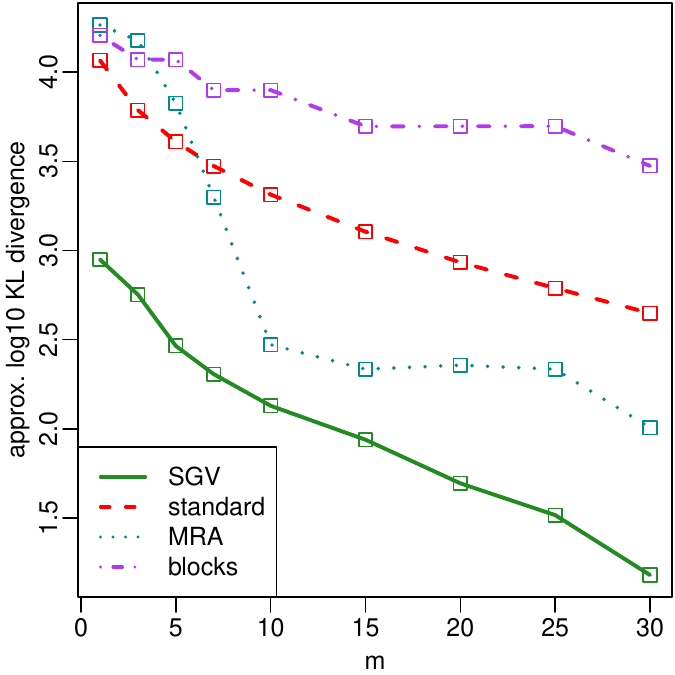}
	\caption{$\nu=0.5$, $n_z=62{,}500$}
	\label{fig:largen05}
	\end{subfigure}%
\hfill
	\begin{subfigure}{.33\textwidth}
	\centering
	\includegraphics[width =.97\linewidth]{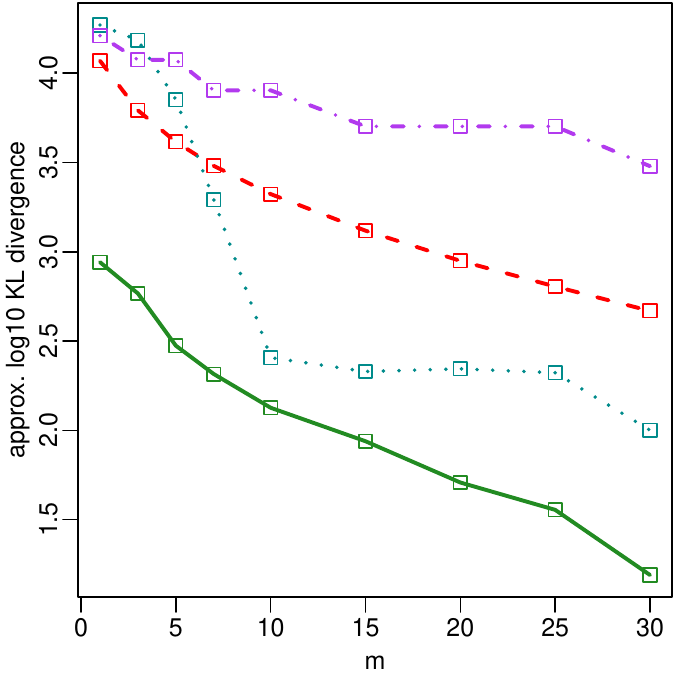}
	\caption{$\nu=1.5$, $n_z=62{,}500$}
	\label{fig:largen15}
	\end{subfigure}%
\hfill
	\begin{subfigure}{.33\textwidth}
	\centering
	\includegraphics[width =.97\linewidth]{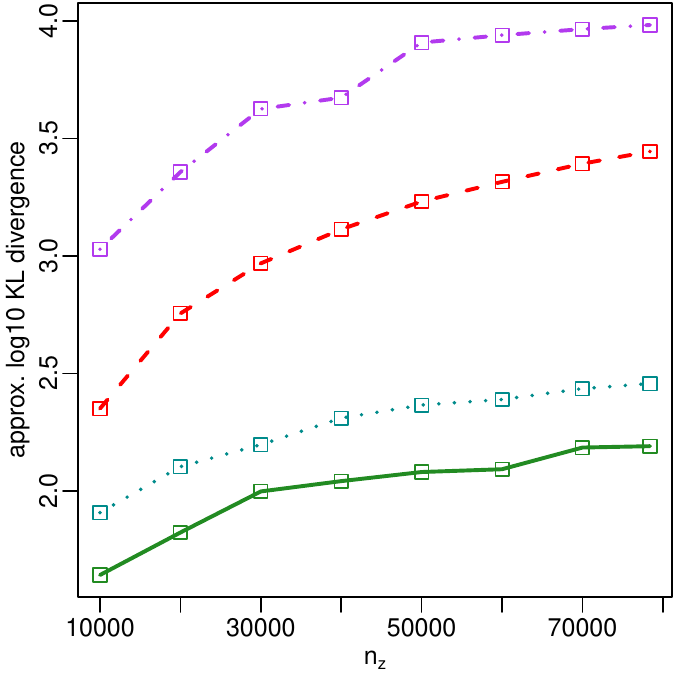}
	\caption{$\nu=1.5$, $m=10$}
	\label{fig:infill}
	\end{subfigure}%
  \caption{Comparison of SGV and standard Vecchia to MRA and independent blocks}
\label{fig:mracomp}
\end{figure}

\section{Conclusions and guidelines \label{sec:conclusions}}

We have presented a general class of sparse GP approximations based on applying Vecchia's approximation to a vector consisting of latent GP realizations and their corresponding noisy observations. Several of the most commonly used GP approximations proposed in the literature are special cases of our class. We provided a formula for fast computation of the likelihood, and we studied the sparsity and computational complexity using connections between Vecchia approaches and directed acyclic graphs. We proposed a novel sparse general Vecchia (SGV), which can dramatically improve upon the approximation accuracy of standard Vecchia while maintaining its linear computational complexity. In contrast, we showed that latent Vecchia (which is used in the nearest-neighbor GP) can scale quadratically in the data size in two-dimensional space.

We now give some guidelines for using the general Vecchia approach in practice. In general, we recommend using our SGV approximation in the presence of nugget or noise, and standard Vecchia if the noise term is zero or almost zero.
In one spatial dimension, left-to-right ordering and nearest-neighbor conditioning is most natural, and SGV is equivalent to latent. In addition, the size $m$ of the conditioning vector can be chosen according to the smoothness (i.e., differentiability at the origin) of the covariance function.
In two-dimensional space, we recommend maxmin ordering. While it is difficult to determine a suitable $m$ a priori, a useful approach is to carry out inference for small $m$, and then gradually increase $m$ until the inference converges or the computational resources are exhausted. Nearest-neighbor conditioning is suitable for low smoothness, while conditioning on the first $m$ latent variables is preferable for higher smoothness when there is a large nugget. This first-$m$ conditioning and its extensions (such as the MRA) has benefits beyond approximation accuracy, such as reduced computational complexity, exact marginal distributions for all variables, and sparse Cholesky factor of the posterior covariance matrix.
While our methods are, in principle, applicable in more than two dimensions, a thorough investigation of their properties in this context is warranted and will be carried out in future work.

The methods and algorithms proposed here are implemented in the R package \texttt{GPvecchia} available at \url{https://github.com/katzfuss-group/GPvecchia}.
\citet{Katzfuss2018} extend the general Vecchia framework to GP prediction at observed and unobserved locations. They also provide further details on computational issues and timing, and an application to a large satellite dataset.

\footnotesize
\appendix
\section*{Acknowledgments}

Katzfuss' research was partially supported by National Science Foundation (NSF) Grant DMS--1521676 and NSF CAREER Grant DMS--1654083. Guinness' research was partially supported by NSF Grant DMS--1613219 and NIH grant R01ES027892. We also acknowledge support from the NSF Research Network for Statistical Methods for Atmospheric and Oceanic Sciences, No.\ 1107046. Wenlong Gong, Marcin Jurek, and Daniel Zilber contributed to the R package \texttt{GPvecchia}.
We would like to thank the reviewers for helpful comments and suggestions.

\section{Vector notation \label{app:vectornotation}}

We define vectors to be objects that contain an ordered list of elements of the same type, equipped with union and intersection operations.
We generally use non-bold lowercase letters for vectors of integers (e.g., $o,p,q$). We use bold lowercase letters (e.g., $\bx,\by,\bz$) for vectors of real numbers or vectors of vectors. Using vectors of vectors makes some of the early definitions slightly cumbersome but greatly simplifies the main unifying results of the paper. Bold uppercase letters usually refer to matrices (e.g., $\bm{C},\bm{K}$), and script letters (e.g., $\mathcal{S}$) for vectors of locations or vectors of vectors of locations.

For example, define $\by = (\by_1,\by_2,\by_3,\by_4,\by_5)$ as a vector of vectors. Subvectoring is accomplished with index vectors and uses subscript notation, for example if $o = (4,1,2)$ is a vector of indices, then $\by_o = (\by_4,\by_1,\by_2)$, respecting the ordering of the index vector.
Unions of vectors are vectors and are defined when the two vectors have the same type and when the ordering of the union is defined. For example, if $\bz = (\bz_1,\bz_2)$, then $\by_o \cup \bz = (\by_4,\by_1, \bz_1, \by_2, \bz_2)$ is a complete definition of the union of $\by_o$ and $\bz$. Likewise, the intersection $\by \cap \bz$ consists of the common elements of the two vectors and is fully defined when the ordering of the intersection is defined.

When the situation demands more abstractness, the ordering of the elements of the union or intersection can be defined via an index function $\#$ that inputs an element and a vector and returns the index occupied by the element in the vector.
Continuing the example above, $\#(\by_4, \by) = 4$, whereas $\#(\by_4, \by_o \cup \bz) = 1$. The index function is vectorized, meaning that $\#( \bz, \by \cup \bz )=(3,5)$ returns the vector of indices occupied by $\bz$ in $\by \cup \bz$. This allows the index function to act as an inverse of the union operator, in the sense that $(\by \cup \bz)_{\#(\bz, \by \cup \bz)} = \bz$.

Vectors whose elements are real numbers are considered as the usual column vectors to which vector addition and multiplication rules apply. Matrices are simply two-dimensional vectors that use double subscripting, and all matrices are viewed as block matrices, with the blocks defined based on context. Functions are vectorized with respect to vectors of locations. For example, if $\locs = (\locs_1,\ldots,\locs_{\ell})$, $\bA = K(\locs,\locs)$ is an $\ell \times \ell$ block matrix with block $\bA_{ij} = K(\locs_i,\locs_j)$. We use $\all$ to represent the vector of all indices, and so $\bA_{i \all} = K(\locs_i,\locs)$.

\section{Review of directed acyclic graphs (DAGs) \label{app:dags}}

Here we provide a brief review of DAGs \citep[see, e.g.,][Sect.~2]{Rutimann2009}. A directed graph consists of vertices, say $\{\bx_1,\ldots,\bx_b\}$, and directed edges (i.e., arrows). Two vertices $\bx_i$ and $\bx_j$ are called adjacent if there is an edge between them. If the edge is directed from $\bx_j$ to $\bx_i$, $\bx_j$ is called a parent of $\bx_i$, and we write $\bx_j \to \bx_i$. If there is no directed edge from $\bx_j$ to $\bx_i$, we write $\bx_j \not\to \bx_i$. A path $\path$ is a sequence of adjacent vertices, and a directed path follows the direction of the arrows. A vertex $\bx_j$ on a path $\path$ is said to be a collider on $\path$ if it has converging arrows on $\path$ (i.e., if the two edges in $\path$ connected to $\bx_j$ both point toward it). If there is a directed path from $\bx_j$ to $\bx_i$, then $\bx_i$ is called a descendant of $\bx_j$. A directed graph is called a DAG if it does not contain directed paths for which the first and last vertices coincide.

For any three disjoint subsets $\mathcal{A},\mathcal{B},\mathcal{C}$ of $\{\bx_1,\ldots,\bx_b\}$, $\mathcal{A}$ and $\mathcal{B}$ are called $d$-separated by $\mathcal{C}$ if, for every (undirected) path $\path$ from a vertex in $\mathcal{A}$ to a vertex in $\mathcal{B}$, there is at least one vertex $\bx_k \in \path$ that blocks the path in one of the following ways:
\begin{description}[itemsep=0.5pt,topsep=1pt,parsep=0.5pt]
\item[B1:] $\bx_k$  is not a collider on $\path$ and $\bx_k$ is in $\mathcal{C}$, or
\item[B2:] $\bx_k$  is a collider on $\path$ and neither $\bx_k$ nor any of its descendants are in $\mathcal{C}$.
\end{description}
If $\bx$ follows a multivariate normal distribution (as we assume here), then $\mathcal{A}$ and $\mathcal{B}$ are conditionally independent given $\mathcal{C}$ if and only if they are $d$-separated by $\mathcal{C}$.

\section{Same conditioning sets (SCS)\label{app:scs}}

In the SCS approach, every $\by_i$ has the same conditioning vector $\by_1$ of size $r_1$; that is, $q(i) = (1)$ for all $i>1$. This is the strategy employed by the MPP and FSA in Sections \ref{sec:mpp}--\ref{sec:fsa} with $r_i=1$ and $r_i = r$, respectively, for $i>1$. For example, one could choose the first $r_1$ variables in the maxmin ordering, which result in a coarse grid over $\domain$.

SCS has several advantages.
First, latent Vecchia automatically adheres to the SGV rules for SCS; or, in other words, the sparsity for the latent approach can be guaranteed.
Second, as discussed in Section \ref{sec:m}, if smoothness and range are large enough, no screening effect will hold. SCS is an extension of the predictive process, which tends to work well in such ``smooth'' situations, because it is equivalent to a Nystr{\"o}m approximation of the leading terms of the Karhunen-Lo\'eve expansion of $y(\cdot)$ \citep{Sang2012}.
Third, a lower computational complexity can be achieved, because $C(\bx_{g(i)},\bx_{g(i)})=C(\by_1,\by_1)$ in \eqref{eq:densproduct} is the same matrix for all $i=2,\ldots,l$ and its Cholesky decomposition only needs to be computed once. Assuming $r_1 = rm$, the cost of the Cholesky decomposition is $\order((rm)^3)$, and each $\bB_i$ and $\bD_i$ can be computed in $\order(r_i^3m^2)$ time, resulting in an overall time complexity for SCS of $\order(nm^2r^2)$ (i.e., reduced by factor $m$ relative to the general case) if $r_i = r$ for $i>1$.
Fourth, the marginal distributions of the $\bx_i$ (and hence also the variances) are exact (see Section \ref{sec:fsa}).
Fifth, $\bV^{-1}$ has the same sparsity structure as $\bV$, which allows fast calculation of the joint posterior predictive distribution for a large number of prediction locations, and extension to Kalman-filter-type inference for massive spatio-temporal data \citep{Jurek2018}.
All of these advantages also hold for the MRA, which can be viewed as an iterative SCS approach at multiple resolutions \citep[][]{Katzfuss2015,Katzfuss2017b,Jurek2018}. However, SCS and MRA may require $r_1 = \order(\sqrt{n_z})$ for accurate approximations in two-dimensional space, which results in a time complexity of $\order(n_z^{3/2})$ \citep{minden2016fast}.

\section{Comparison to composite likelihood\label{sec:clcomp}}

Using simulated data, we compared maximum likelihood estimation (MLE) using our SGV approach to two composite likelihood methods, full-conditional likelihood (FCL) and pairwise-block likelihood (PBL). The data were simulated from a GP model as in Section \ref{sec:gpintro} with exponential covariance, $C(\bs_1,\bs_2) = \sigma^2 \exp( - \| \bs_1 - \bs_2 \| / \alpha )$, where the process and noise variances were known, $\sigma^2=2$ and $\tau^2=1$, respectively, and the task was to estimate the unknown range $\alpha$.

Our first simulation study considered a FCL, defined here as $\adens(\bz) = \prod_{i=1}^n f(z_i | \bz_{-i} )$. As the FCL is expensive to compute, we considered a relatively small grid of size $30 \times 30$ with spacing 1. We simulated 300 datasets with true range $\alpha=10$, and for each dataset $i$ we computed the MLE $\widehat{\alpha}_{ij}$ using each method $j=1,\ldots,5$, namely exact likelihood, FCL, and SGV with $m= 10$, $15$, and $20$. Table \ref{tab:fullcond} contains a summary of the results. We included 95\% confidence intervals for the MSEs, based on a normal approximation of the squared errors. We also computed confidence intervals for the difference in MSEs,
$
( \widehat{\alpha}_{ij} - \alpha )^2 - (\widehat{\alpha}_{iJ} - \alpha )^2,
$
where $J$ corresponds to SGV with $m=20$. FCL was not competitive with SGV.

\begin{table}
\footnotesize \centering
\begin{subtable}{.48\textwidth}
\begin{tabular}{l|rcr}
method & MSE & 95\% CI & CI for diff.\ \\
\hline
exact lik.\ & 3.21 & (2.60, 3.81) & (-0.03, 0.27) \\
FCL & 4.23 & (3.48, 4.97) & (0.61, 1.67) \\
SGV $m\!=\!10$ & 3.22 & (2.63, 3.80) & (-0.15, 0.41) \\
SGV $m\!=\!15$ & 3.07 & (2.50, 3.63) & (-0.20, 0.15) \\
SGV $m\!=\!20$ & 3.09 & (2.53, 3.64) & (0.00, 0.00)
\end{tabular}
\caption{$30 \times 30$ grid}
\label{tab:fullcond}
\end{subtable}
\hfill
\begin{subtable}{.48\textwidth}
\begin{tabular}{l|rrcr}
method & MSE & 95\% CI & CI for diff.\ \\
\hline
PBL 100 bl.\ &   5.71 & (4.79, 6.62) & (0.65, 1.92) \\
PBL 144 bl.\ &   6.26 & (5.22, 7.30) & (1.05, 2.63) \\
PBL 225 bl.\ &   7.05 & (5.73, 8.37) & (1.64, 3.61) \\
SGV $m\!=\!20$ &   4.81 & (4.07, 5.54) & (0.03, 0.74) \\
SGV $m\!=\!40$ &   4.42 & (3.65, 5.19) & (0.00, 0.00)
\end{tabular}
\caption{$100 \times 100$ grid}
\label{tab:pairwise}
\end{subtable}
\caption{Estimation of range parameter from simulated data, including 95\% confidence intervals (CIs) for the difference in MSE relative to the method in the last row}
\label{tab:cl}
\end{table}

Our second simulation study considered PBLs,
$
\prod_{i\sim j} f( \bz_i, \bz_j ),
$
where each $\bz_i$ corresponds to a contiguous rectangle in the spatial domain, and $i \sim j$ means that blocks $i$ and $j$ are spatial neighbors. We used a larger grid of size $100 \times 100$ and a larger range parameter $\alpha = 30$. We again simulated 300 datasets and computed the MLE of $\alpha$ using several settings of the PBL and our SGV. The results are given in Table \ref{tab:pairwise}. 
Note that even SGV with $m=20$ performed better than PBL with 100 blocks of size 100 each.

\section{Illustration of Bayesian inference\label{sec:bayesian}}

This section demonstrates how one can carry out Bayesian inference using the general Vecchia approximation. We used SGV with $m=30$ under
the same settings as in Appendix \ref{sec:clcomp}.

First, we simulated a dataset in the setting of the $30 \times 30$ grid. Based on the prior $\log \alpha \sim \normal(\log(10),0.6^2)$, we evaluated the exact posterior $\dens(\alpha|\bz_o)$ and the posterior $\adens(\alpha|\bz_o)$ implied by the SGV likelihood on a fine grid (see Figure \ref{fig:range_post}). Based on this discrete approximation to the posterior, Figure \ref{fig:pred_post} shows the posterior predictive distribution $\dens(y(\bs_*)|\bz_o)$ at the unobserved point $\bs_* = (15.5,15.5)$ in the center of the grid, along with the distribution $\adens(y(\bs_0)|\bz_o)$ approximated using a general Vecchia prediction method called RF-full in \citet{Katzfuss2018}. The Vecchia posteriors were almost identical to the exact distributions.

We also simulated a dataset in the setting of the $100 \times 100$ grid, randomly selecting $n_z=9{,}000$ data points as observed, with the remaining GP realizations used as test data. Based on the prior $\log \alpha \sim \normal(\log(30),0.6^2)$ and the SGV likelihood, we ran a Metropolis-Hastings sampler for $\log \alpha$ with a normal proposal distribution with standard deviation 0.5 for 1,200 iterations, discarding the first 200 samples and thinning the remaining by a factor of 10. We then computed posterior predictive distributions $\adens(\by_{-o}|\bz_o)$ using RF-full for the 1,000 held-out test locations. Figure \ref{fig:post_intervals} shows the resulting posterior $80\%$ intervals along with the true simulated values of $\by_{-o}$ at the test locations. 79.8\% of the intervals covered the true values, indicating that the posterior predictive distributions obtained using general Vecchia were well calibrated.

\begin{figure}
	\begin{subfigure}{.33\textwidth}
	\centering
	\includegraphics[width =.97\linewidth]{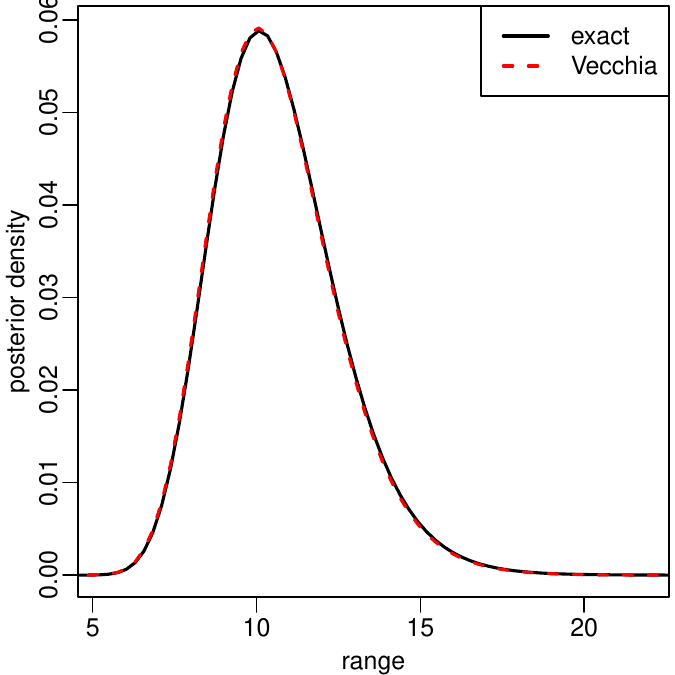}
	\caption{$\dens(\alpha|\bz_o)$ for $30 \times 30$ grid}
	\label{fig:range_post}
	\end{subfigure}%
\hfill
	\begin{subfigure}{.33\textwidth}
	\centering
	\includegraphics[width =.97\linewidth]{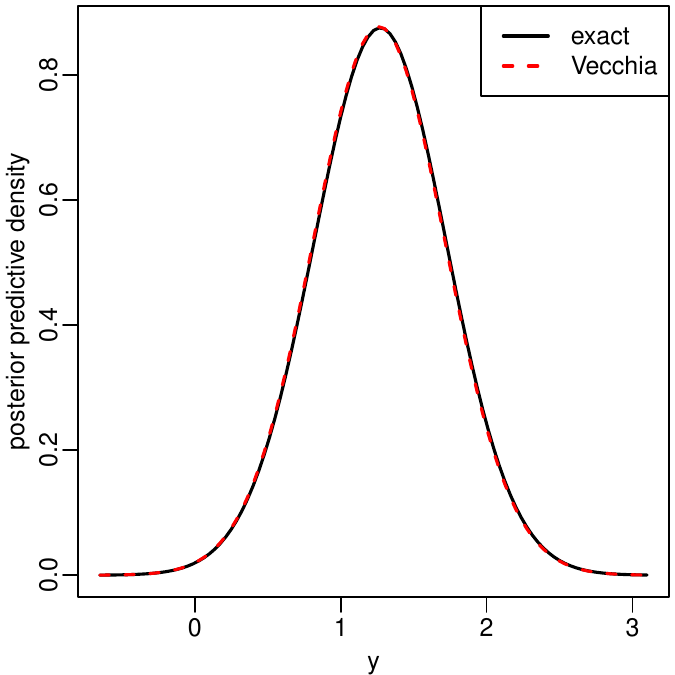}
	\caption{$\dens(y(\bs_*)|\bz_o)$  for $30 \times 30$ grid}
	\label{fig:pred_post}
	\end{subfigure}%
\hfill
	\begin{subfigure}{.33\textwidth}
	\centering
	\includegraphics[width =.97\linewidth]{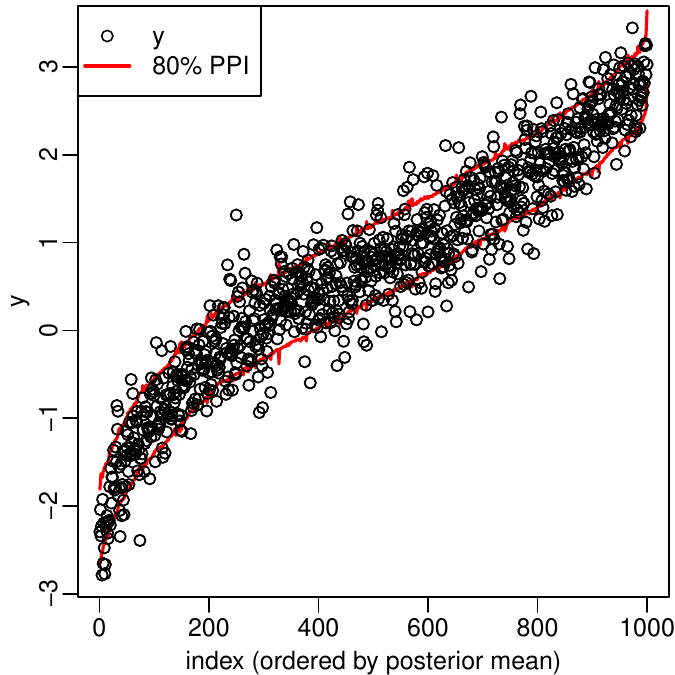}
	\caption{PPIs for $100 \times 100$ grid}
	\label{fig:post_intervals}
	\end{subfigure}%
  \caption{Results for Bayesian inference based on general Vecchia. PPI: posterior predictive interval.}
\label{fig:bayes}
\end{figure}

\section{Proofs \label{app:proofs}}

In this section, we provide proofs for the propositions stated throughout the article.

\begin{proof}[Proof of Proposition \ref{prop:matrixrep}]
For the density in \eqref{eq:densproduct}, we have $\adens(\bx) \propto \exp(-w/2)$, where $w = \sum_{i=1}^b \ba_i'\ba_i$ and
\[
\textstyle\ba_i = (\bD_i^{-1/2})'(\bx_i-\bB_i\bx_{g(i)}) = (\bD_i^{-1/2})'\bx_i + \sum_{j \in g(i)}(-(\bD_i^{-1/2})'\bB_{i}^j\bx_j) = \sum_{j=1}^b \bU_{ji}' \bx_j,
\]
with $\bU$ defined as in \eqref{eq:U}. Hence, $w = \sum_{i=1}^b (\bU_i'\bx)'(\bU_i'\bx) = \bx'\bU\bU'\bx$, where $\bU_i$ is the $i$th block of columns in $\bU$. Because $\bU$ is a nonsingular matrix, we have $\adens(\bx) = \normal_n(\bx | \bfzero, \widehat\bC)$ with $\widehat \bC^{-1} = \bU \bU'$, which proves the first part of the proposition. Note that a proof for a similar expression of the approximate joint density can be found in \citet[][App.~A2]{Datta2016}.

Then, because $\bP$ is a symmetric matrix, we have $\bP = \bP' = \bP^{-1}$ (and $\bP\bM\bP$ results in reverse row-column ordering of the square matrix $\bM$). Thus, we can write
$
\bP \widehat\bC^{-1} \bP = \bP \bU \bU' \bP = \bP \bU \bP\bP \bU' \bP =  (\bP \bU \bP)(\bP \bU \bP)'.
$
The matrix $\bP \bU \bP$ is lower triangular with positive values on the diagonal, and so it must be the Cholesky factor of $\bP \widehat\bC^{-1} \bP'$ since the Cholesky factor is the unique such lower triangular matrix. Therefore, we have $\bU = \bP\chol(\bP\widehat\bC^{-1}\bP)\bP = \rchol(\widehat\bC^{-1})$.
\end{proof}

\begin{proof}[Proof of Proposition \ref{prop:likelihood}]
By rearranging the definition of a conditional density, we obtain
$
\adens(\bz_o) = {\adens(\bx)}/{\adens(\by|\bz_o)},
$
which holds for any $\by$, and so we simply set $\by=\bfzero$. Letting $\bx_0$ be $\bx$ with $\by=\bfzero$, we have
\begin{equation}
\label{eq:likderivation}
\adens(\bz_o) = \frac{\normal(\bx_0|\bfzero,\widehat\bC)}{\normal(\bfzero|\bfmu,\bW^{-1})},
\end{equation}
where $\bfmu \colonequals \E(\by|\bz_o)$ and $\bW \colonequals \var(\by|\bz_o)^{-1}$. For the numerator in \eqref{eq:likderivation}, using the factorization $\widehat\bC^{-1} = \bU \bU'$ supplied by the Vecchia approach, we have $\log |\widehat\bC| = -2 \log |\bU| = \sum_{i=1}^b \log |\bD_i|$ and $\bx_0'\widehat\bC^{-1}\bx_0 = (\bU'\bx_0)'(\bU'\bx_0) = \tilde\bz'\tilde\bz$.

For the denominator in \eqref{eq:likderivation}, according to Theorem 12.2 in \citet{Rue2010}, we have $\bW = \bU_Y\bU_Y'$ and $\bfmu = -\bW^{-1} \bU_Y\bU_Z'\bz_o$. Because $\bW = \bV \bV'$ and $\bV$ is upper triangular, we have $\log |\bW^{-1}|= -2\log |\bV|= - 2\sum_i \log |\bV_{ii}|$. The quadratic form can be obtained as
$
\bfmu'\bW\bfmu = \tilde\bz'\bU_Y'\bW^{-1}\bW\bW^{-1} \bU_Y\tilde\bz = (\bV^{-1}\bU_Y\tilde\bz)'(\bV^{-1}\bU_Y\tilde\bz).
$
\end{proof}

\begin{proof}[Proof of Proposition \ref{prop:sparsity}]
It can be easily verified that $\rchol(\bA) = \bP(\chol(\bP\bA\bP))\bP$ is an upper-triangular matrix for any symmetric, positive-definite $\bA$, because $\chol(\cdot)$ was defined to return the lower-triangular Cholesky factor. Hence, both $\bU = \rchol(\bC^{-1})$ and $\bV=\rchol(\bW)$ are upper triangular. Further, $\bW = \bU_Y \bU_Y'$ is symmetric. Therefore, we only consider the case $j < i$ in the remainder of this proof.

It is well known for precision matrices in multivariate normal distributions \citep[e.g.,][Thm.~12.1]{Rue2010} that $\bW_{ji} = \bfzero$ if $\by_i$ and $\by_j$ are conditionally independent given all other variables in the model (i.e., conditional on $\mathcal{C}_W = \{\by_{-ij},\bz_o\}$).
A similar result for the sparsity of the Cholesky factor \citep[e.g.,][Thm.~12.5]{Rue2010} can be rephrased for our reverse Cholesky decomposition ($\bU = \rchol(\widehat\bC^{-1})$) to say that $\bU_{ji}= \bfzero$ if $\bx_j$ and $\bx_i$ are conditionally independent given $\mathcal{C}_U = \{\bx_{h(i) \setminus j} \}$. For $\bV$, which is the Cholesky factor of the posterior precision matrix $\bW$ (i.e., conditional on $\bz_o$), we have $\bV_{ji} = \bfzero$ if $\by_i$ and $\by_j$ are conditionally independent given $\mathcal{C}_V = \{\by_{h(i) \setminus j},\bz_o\}$. Thus, $\bU_{ji} = \bfzero$ if and only if $\bx_i$ and $\bx_j$ are $d$-separated by $\mathcal{C}_U$ in the DAG, and $\bW_{ji} = \bfzero$ and $\bV_{ji} = \bfzero$ if and only if $\by_i$ and $\by_j$ are $d$-separated by $\mathcal{C}_W$ and $\mathcal{C}_V$, respectively.

Note that $d$-separation cannot hold if $\bx_j \to \bx_i$, and so we only consider (paths between) non-adjacent $\bx_j$ and $\bx_i$ in the remainder of the proof. Any path between such $\bx_j$ and $\bx_i$ must pass through at least one non-collider $\bx_l$ with $l<i$, or through a collider $\by_k$ with $2k - 1 > i$, because arrows in the Vecchia approach can only go forward in the ordering and the only parent for each $\bz_k$ is $\by_k$.
As we have $\mathcal{C}_U = \{\bx_l: l <i , l\neq j \}$, this means that any path between (non-adjacent) $\bx_i$ and $\bx_j$ is either blocked by a non-collider $\bx_l \in \mathcal{C}_U$ (condition B1) or by a collider $\bx_k \notin \mathcal{C}_U$ (B2), which proves part 1.\ of the proposition.

For $\bW$, as all vertices other than $\by_i$ and $\by_j$ are in $\mathcal{C}_W$, we only need to consider condition B1. The only paths between $\by_j$ and $\by_i$ that do not contain a vertex $\bx_k \in \mathcal{C}_W$ that is not a collider (and hence blocks the path), are paths of the form $(\by_j, \by_k, \by_i)$ with $\by_j \to \by_k$ and $\by_i \to \by_k$. This proves part 2.

For $\bV$, any path between $\by_i$ and $\by_j$ that passes through $\mathcal{C}_V = \{\by_{h(i) \setminus j},\bz_o\}$ includes a non-collider in $\mathcal{C}_V$ and is thus blocked (B1). Thus, $\bV_{ji}$ can only be nonzero if there is a path between $\by_i$ and $\by_j$ on the subgraph $\{\by_i,\by_j\} \cup \{\by_k: k>i, \by_k \textnormal{ has at least one observed descendant}\}$.
\end{proof}

\begin{proof}[Proof of Proposition \ref{prop:KLordering}]
Note that for any $p(i) \subset h(i)$, we have $\dens(\by_i|\by_{p(i)}) = \dens(\by_i|\by_{p(i)},\bz_{p(i)})$, due to conditional independence between $\bz_k$ and any other variable in the model given $\by_k$. Thus, for latent Vecchia we can change the conditioning vector of $\by_i$ in \eqref{eq:vecchia2} from $\by_{q(i)}$ to $(\by_{q(i)},\bz_{q(i)})$ without changing the approximation. Likewise, for SGV we can change the conditioning vector of $\by_i$ from $(\by_{q_y(i)},\bz_{q_z(i)})$ to $(\by_{q_y(i)},\bz_{q(i)})$ without changing the approximation, since $q_y(i) \cup q_z(i) = q(i)$. Further, note that $\bz_{q(i)}$ is a subset of $(\by_{q_y(i)},\bz_{q(i)})$, which is in turn a subset of $(\by_{q(i)},\bz_{q(i)})$. Thus, the proposition follows using Thm.~1 in \citet{Guinness2016a}, which says that adding variables to the conditioning vector in Vecchia approximations cannot increase the KL divergence from the true model.
\end{proof}

\begin{proof}[Proof of Proposition \ref{prop:latentcomplexity}]
Without loss of generality, we assume that the locations lie on a regular unit-distance grid on the $d$-dimensional hypercube with $n_z^{1/d}$ unique values in each dimension, and we assume a lexicographic ordering in which locations are ordered first by their first coordinate, for those with same first coordinate by their second coordinate, and so forth.
Let $\bs_i = (s_{i1},\ldots,s_{id})$ be the location of $\by_i$ (and $\bz_i$). Consider a pair of locations $\bs_i$ and $\bs_j$ with $\bs_j = (s_{i1}-t,s_{j2},\ldots,s_{jd})$, which under lexicographic ordering gives $j < i$ when $t>0$. For $\bs_a = (s_{i1}+1,s_{j2},\ldots,s_{jd})$, we have $a>i$. Further, ignoring edge cases, we have $\by_j \to \by_a$ when $1 \leq t \leq t_{m}$, where $t_{m} = \order(m^{1/d})$, since the conditioning vector of $\by_a$ corresponds to the $m$ locations roughly in a semi-ball around $\bs_a$ of radius $t_m$ (e.g., $t_m = (2m/\pi)^{1/2}$ for $d=2$). Also consider $\bs_p = (s_{i1}+1,s_{i2},\ldots,s_{id})$, for which also $p > i$. We can find a path between $\by_a$ and $\by_p$ on the subgraph $\{\by_k : k > i\}$, since all variables on the hyperplane $(s_{i1}+1,\cdot \, ,\ldots,\cdot \, )$ are connected (if $m\geq d$) and have index greater than $i$. We also have $\by_i \to \by_p$. Therefore, there is a path from $\by_j$ to $\by_i$ on the subgraph $\{ \by_j, \by_i \} \cup \{\by_k: k > i\}$, which by Proposition \ref{prop:sparsity} means that $\bV_{ji}$ is non-zero.
Since this is true for any $\bs_j = \{ s_{i1}-t,s_{j2},\ldots,s_{jd} \}$ with $1 \leq t \leq t_{m}$,
there are $\order(n^{1-1/d}m^{1/d})$ nonzero elements in each of the $n_z$ columns of $\bV$, giving a memory complexity of $\order(n^{2-1/d}m^{1/d})$.
As the time complexity for obtaining the reordered Cholesky factor $\bV$ is on the order of the sum of the squares of the number of nonzero elements per column in $\bV$ \citep[e.g.,][Thm.~2.2]{Toledo2007}, the time complexity of obtaining $\bV$ from $\bW$ is $\order( n^{3-2/d} m^{2/d} )$.
\end{proof}

\begin{proof}[Proof of Proposition \ref{prop:complexity}]
First, we show that, for SGV, $\bV$ has at most $mr$ off-diagonal nonzero elements per column. Using Proposition \ref{prop:sparsity}, that means that we need to show that, for any $j<i$, there is no path between $\by_i$ and $\by_j$ on the subgraph $\mathcal{G}_{ij}^\ell$ if $\by_j \not\to \by_i$, where $\mathcal{G}_{ij}^k \colonequals \{\by_i,\by_j\} \cup \{\by_t: \max(i,j)<t \leq k\}$. (Note that this statement then also holds if we restrict the subgraph to vertices with observed descendants.) Define $D_i^k \colonequals \{t: \by_t \textnormal{ is a descendant of } \by_i \textnormal{ in } \mathcal{G}_{ij}^k\}$, and analogously for $D_j^k$. Thus, assuming that $\by_j \not\to \by_i$, we need to show that $D_i^\ell \cap D_j^\ell = \emptyset$, which we will do by induction. We have $\mathcal{G}_{ij}^{i+1} = \{\by_i, \by_j, \by_k\}$, where $q_y(k)$ can only contain either $i$ or $j$ by the rules of the SGV, because $j \notin q_y(i)$, and so $D_i^{i+1} \cap D_j^{i+1} = \emptyset$. Now, assume that $D_i^{k} \cap D_j^{k} = \emptyset$ for $k>i$. Then, for any $t_i \in D_i^k$ and $t_j \in D_j^k$, $\by_{t_i}$ and $\by_{t_j}$ cannot be adjacent. Hence, by the rules of the SGV, $q_y(k+1)$ can only contain either elements of $D_i^k$ or of $D_j^k$, and so $D_i^{k+1} \cap D_j^{k+1} = \emptyset$. In summary, for the SGV and $j<i$, $\bV_{ji} = \bfzero$ unless $\by_j \to \by_i$, and so $\bV$ has at most $mr$ off-diagonal elements per column.

The time complexity for obtaining the reordered Cholesky factor $\bV$ (and the selected inverse of $\bW$) is on the order of the sum of the squares of the number of nonzero elements per column in $\bV$ \citep[e.g.,][Thm.~2.2]{Toledo2007}. Hence, the time complexity for computing $\bW$, its decomposition, and its selected inverse is $\order(nm^2r^2)$. The time and memory complexity for computing $\bU$ is $\order(nm^3r^2)$ and $\order(nmr)$ (i.e., at least as high as that for computing $\bV$), and so SGV has the same computational complexity as standard Vecchia.
\end{proof}

\footnotesize
\bibliographystyle{apalike}
\bibliography{vecchiabib}

\end{document}